\DeclareMathOperator*{\argmax}{argmax}
\newtheorem{mechanism}{Mechanism}
\begin{document}

\title{ABSNFT: Securitization and Repurchase Scheme for Non-Fungible Tokens Based on Game Theoretical Analysis}
\titlerunning{ABSNFT}

\author{Hongyin Chen\inst{1}\and Yukun Cheng \inst{2} \Envelope\and Xiaotie Deng \inst{1} \Envelope \and Wenhan Huang\inst{3} \and Linxuan Rong\inst{4}} 
\authorrunning{Hongyin Chen et al.}
\institute{Center on Frontiers of Computing Studies, Peking University, Beijing, China \email{\{chenhongyin,xiaotie\}@pku.edu.cn} \and Suzhou University of Science and Technology, Suzhou, China  \\ \email{ykcheng@amss.ac.cn} \and Department of Computer Science, Shanghai Jiao Tong Univerisity, Shanghai, China \\ \email{rowdark@sjtu.edu.cn} \and Washington University in St. Louis, St. Louis, MO, the U.S. \\ \email{l.rong@wustl.edu}}

\maketitle              
\begin{abstract}
The Non-Fungible Token (NFT) is viewed as one of the important applications of blockchain technology. Currently NFT has a large market scale and multiple practical standards, however several limitations of the existing mechanism in NFT markets still exist. This work proposes a novel securitization and repurchase scheme for NFT to overcome these limitations. We first provide an Asset-Backed Securities (ABS) solution to settle the limitations of non-fungibility of NFT. Our securitization design aims to enhance the liquidity of NFTs and enable Oracles and Automatic Market Makers (AMMs) for NFTs. Then we propose a novel repurchase protocol for a participant owing a portion of NFT to repurchase other shares to obtain the complete ownership. As the participants may strategically bid during the acquisition process, we formulate the repurchase process as a Stackelberg game to 
explore the equilibrium prices. We also provide solutions to handle difficulties at market such as budget constraints and lazy bidders.
\keywords{Non-Fungible Token \and Asset-Backed Securities \and Blockchain \and Stackelberg Game }
\end{abstract}

\section{Introduction}\label{Introduction}
Ever since the birth of the first piece of Non-Fungible Token (NFT)~\cite{fairfield2021tokenized}~\cite{wang2021non}, the world has witnessed an extraordinarily fast growth of its popularity. NFT markets, especially Opensea\footnote{Opensea Platform. \url{https://opensea.io/}}, have prospered with glamorous statistics of a total of over 80 million pieces of NFTs on the platform and a total transaction volume of over 10 billion US dollars.\footnote{Data source from Opensea \url{https://opensea.io/about}}

NFT is a type of cryptocurrency that each token is non-fungible. 
The first standard of NFT, ERC-721~\cite{erc721}, gives support to a type of tokens that each has a unique identifier. The feature of uniqueness makes NFTs usually be tied to specific assets, such as digital artwork and electronic pets. Some researches also explore the application of NFT in patent, copyright and physical assets~\cite{ccauglayan2021nfts}~\cite{valeonti2021crypto}.

The technology of NFT has also advanced rapidly. Besides ERC-721, ERC-1155~\cite{erc1155} is also a popular standard of NFT. ERC-1155 is a flexible standard that supports multiple series of tokens, each series is a type of NFT or Fungible Token (FT).
NFT protocols are usually derived by smart contracts in a permissionless blockchain, but there are now some NFT designs for permissioned blockchains~\cite{hong2019design}.

Although NFT has a large market scale and multiple practical standards, there still exist several limitations in NFT market, one of which is the poor liquidity.
 
The issue of liquidity is crucial in both De-fi and traditional finance. Usually, if assets have higher liquidity, they would have higher trading volume, and further have higher prices~\cite{amihud1991liquidity}.
Particularly, in blockchain, the liquidity of Fungible Tokens, such as wBTC and ETH, has been enhanced by Oracles~\cite{mammadzada2020blockchain} and Automated Market Makers (AMMs)~\cite{angeris2020improved} like Uniswap and Sushiswap. However, the non-fungibility property of NFT leads to poor liquidity. For this reason, the existing NFT marketplace usually uses the English Auction or Dutch Auction to trade NFTs~\cite{kong2021alternative}.

\begin{itemize}
    \item Firstly, Non-Fungibility means indivisible.

    As the NFT series with the highest market value, CryptoPunks has an average trading price of 189 Eth\footnote{90-day average before November 22, 2021. Data source from \url{https://opensea.io/activity/cryptopunks}}, which is worth more than 790, 000 U.S. dollars\footnote{The price of Eth here refers to the data on November 22, 2021. \url{https://etherscan.io/chart/etherprice}}. If bitcoins are expensive, we can trade 0.01 bitcoins, but we can’t trade 0.01 CryptoPunks. As a result, the liquidity of CryptoPunks is significantly lower than other NFT series. Therefore, the liquidity for NFTs with high values is limited. 
    
    \item Secondly, shared ownership is not allowed because of Non-Fungibility. Therefore, it’s difficult to reduce risk and enhance the liquidity of NFTs through portfolios. What’s more, Some NFT assets such as patents need financial support to foster the process of development. They would require a means to attract finance. The above two limitations also exist in traditional settings.
    
    \item Thirdly, the feature of non-fungible makes NFT unable to be directly applied in Oracles~\cite{mammadzada2020blockchain} and Automated Market Makers (AMMs)~\cite{angeris2020improved}, which are important methods of pricing in the blockchain. This is because fungibility is the basis of Oracles and AMMs.
\end{itemize}

\subsection{Main Contributions}
We present ABSNFT, a securitization and repurchase scheme for NFT, which overcomes the above-mentioned limitations from the following three aspects.

\begin{itemize}
    \item Firstly, we propose an Asset-Backed Securities (ABS)~\cite{bhattacharya1996asset} solution to settle the limitations of non-fungibility of NFT. We design a smart contract including three parts: NFT Securitization Process, NFT Repurchase Process, and NFT Restruction Process. In our smart contract, a complete NFT can be securitized into fungible securities, and fungible securities can be reconstructed into a complete NFT.
    
    The securitization process manages to resolve the majority of issues the current NFT application is confronted with: the securities of NFT have lower values compared to the complete one before securitization, which increases market liquidity; securities could act as fungible tokens that can be applied in Oracles and AMMs; the investment risk is being reduced dramatically; financing is possible since securities can belong to different owners.
    
    \item Secondly, we design a novel repurchase process based on Stackelberg game~\cite{von2010market}, which provides a mechanism to repurchase NFT securities at a fair price. The NFT Repurchase Process can be triggered by the participant who owns more than half of the securities of the NFT. We analyze the Stackelberg Equilibrium (SE) in three different settings and get good theoretical results. 

    \item Thirdly, we propose solutions to the budget constraints and lazy bidders, which make good use of the decentralization of blockchain. We propose a protocol that allows participants to accept financial support in the repurchase game to reduce the influence of budget constraints. We also propose two solutions for players that might not bid in the game, which prevent the game process from being blocked and protect the utility of lazy bidders.
\end{itemize}

\subsection{Related Works}\label{related}
In financial research, there are two well-studied repurchase scenarios, repurchase agreement and stock repurchase.

Repurchase agreement is a short-term transaction between two parties in which one party borrows cash from the other by pledging a financial security as collateral~\cite{acharya2011repurchase}. The former party is called the security issuer, and the latter party is called the investors. To avoid the failure of liquidation, the security issuer needs to mortgage assets or credit. An instance of such work from the Federal Reserve Bank of New York Quarterly Review introduces and analyzes a repurchase agreement for federal funds~\cite{lucas1977federal}. The Quarterly Review describes the repurchase agreement as “involving little risk”, as either parties’ interests are been safeguarded.

Studies of repurchase agreement cannot be directly applied to our topic. The key point is that the problem we are studying is not to mortgage NFTs to obtain cash flow, but to securitize NFTs to overcome the restrictions of non-fungibility. What's more, the repurchase prices are usually derived from the market model. But the NFT market is not as mature as the financial market, which makes it hard to calculate a fair price through the market model.

Stock repurchase refers to the behavior that listed companies repurchase stocks from the stockholders at a certain price~\cite{constantinides1989optimal}. Usually, stock repurchase is adopted to release positive signals to the stock market and doesn't aim to repurchase all stocks. However, NFTs usually need to be complete without securities in cross-chain scenarios.

Oxygen~\cite{oxygen2018} is a decentralized platform that supports repurchase agreement based on digital assets. In Oxygen, users can borrow cash flow or assets with good liquidity by pledging assets with poor liquidity. The repurchase prices and the evaluations of assets are provided by a decentralized exchange, Serum~\cite{serum}. However, such pricing method is dangerous because decentralized exchanges are very vulnerable to attacks like flash loans~\cite{wang2021towards}.

ABSNFT is distinguished among all these works because it adapts well to the particularities of NFT market and blockchain.

\begin{itemize}
    \item First, the securities in ABSNFT represent property rights rather than creditor's rights. Investors do not need to worry that the cash flow or the mortgaged assets of the securities issuer may not cover the liquidation, which may be risky in a repurchase agreement. What's more, any investor can trigger a repurchase process as long as he owns more than half of the shares.
    \item Second, the repurchase process of ABSNFT doesn't depend on market models or exchanges. The repurchase price is decided by the bids given by participants, and every participant won't get negative utility if he bids truthfully.
    \item Third, ABSNFT has well utilized the benefits of blockchain technology. The tradings of securities are driven by the smart contract. The operations of ABSNFT don't rely on centralized third-party and are available $24\times 7$ for participants.
\end{itemize}

The rest of the paper is arranged as follows. Section~\ref{NFTScheme} introduces the NFT securitization process. In Section~\ref{2pGame} and Section~\ref{2PRG}, we study the two-player repurchase game in a single round and the repeated setting. In Section~\ref{MPGame}, we analyze the repurchase game with multiple leaders and one follower. In section~\ref{discussion}, we discuss the solution to the issues with budget constraints and lazy bidders in the blockchain setting. In the last section, we give a summary of ABSNFT and propose some future works.

\section{NFT Securitization and Repurchase Scheme}\label{NFTScheme}

In this section, we would like to introduce the general framework of the smart contract, denoted by $C_{NFT}$, which includes the securitization process, the trading process, repurchase process and restruction process for a given NFT.

=
\subsection{Basic Setting of NFT Smart Contract} There are two kinds of NFTs discussed in this paper.
\begin{itemize}
    \item \textbf{Complete NFT}. Complete NFTs are conventional non-fungible tokens, which appear in blockchain systems as a whole. Each complete NFT has a unique token ID. We use $CNFT(id)$ to denote one complete NFT with token ID of $id$.

    \item \textbf{Securitized NFT}. Securitized NFTs are the $Asset~Based~Securities$ of complete NFTs. A complete NFT may be securitized into an amount of securitized units. All units of securitized NFTs from a complete NFT $CNFT(id)$ have the same ID, associated with the ID of $CNFT(id)$.
    Thus we denote the securitized NFT by $SNFT(id)$. In our smart contract, all securitized NFTs can be freely traded.
\end{itemize}

In our setting, all complete NFTs and securitized NFTs belong to one smart contract, denoted by $C_{NFT}$. Although the securitized NFTs are similar to the fungible tokens in ERC-1155 standard~\cite{erc1155}, our smart contract $C_{NFT}$ is actually quite different from ERC-1155 standard. That is because all securitized NFTs in $C_{NFT}$, associated to one complete NFT, have the same ID,
while different NFTs and different fungible tokens generally have different token IDs in ERC-1155 standard. Therefore, our $C_{NFT}$ is based on ERC-721 standard~\cite{erc721}, and the complete NFTs are just the NFTs defined in ERC-721. Table \ref{tab1} lists all functions in $C_{NFT}$.

\begin{table}
\caption{The key functions of $C_{NFT}$}\label{tab1}
\begin{tabular}{|p{4cm}|p{8cm}|}
\hline
{\bf Function Name} & {\bf Function Utility} \\
\hline
 $CNFTownerOf(id)$ & Return the address of the owner of $CNFT(id)$.\\
 \hline
 $CNFTtransferFrom$ $(addr1, addr2, id)$ & Transfer the ownership of $CNFT(id)$
 from address $addr1$ to address $addr2$. Only the owner of $CNFT(id)$ has the right to trigger this function.\\
 \hline
 $SNFTtotalSupply(id)$ & Return the total amount of $SNFT(id)$ in contract $C_{NFT}$.\\
 \hline
 $SNFTbalanceOf(addr, id)$ & Return the amount of $SNFT(id)$ owned by address $addr$.\\
 \hline
 $SNFTtransferFrom$ $($addr1$, $addr2$, id, amount)$ & Transfer the ownership of $amout$ unit of $SNFT(id)$ from address $addr1$ to address $addr2$. \\
 \hline
 $CNFTsecuritization$ $(addr, id, amount)$ & Freeze $CNFT(id)$, and then transfer $amout$ units of $SNFT(id)$ to address $addr$. Only the owner of $CNFT(id)$ can trigger this function.\\
 \hline
 $CNFTrestruction(addr, id)$ & Burn all $SNFT(id)$, unfreeze $CNFT(id)$, and then transfer the ownership of $CNFT(id)$ to address $addr$. Only the one who owns all amounts of  $SNFT(id)$ can trigger this function.\\
 \hline
 $Repurchase(id)$& Start the repurchase process of $SNFT(id)$. Only the one who owns more than half amounts of $SNFT(id)$ can trigger this function.\\
 \hline
\end{tabular}
\end{table}

The task of smart contract $C_{NFT}$ includes securitizing complete NFTs, trading the securitized NFTs among participants, and restructing complete NFT after repurchasing all securitized NFTs with the same ID. Bescause the transactions of securitized NFTs are similar to those of fungible tokens, we omit the trading process here and introduce NFT securitization process, NFT repurchse process and NFT restruction process in the subsequent three subsections respectively.

\subsection{NFT Securitization Process}
This subsection focuses on the issue of Asset-Backed Securities for Complete NFTs. We propose Algorithm \ref{CNFTsecuritization} to demonstrate the NFT securitization process. To be specific, once $CNFTsecuritization(addr,id,amount)$ is triggered by the owner of $CNFT(id)$, the $amount$ units of securitized NFTs are generated and transferred to address $addr$ in Line 2-4; then the ownership of $CNFT(id)$ would be transferred to a fixed address $FrozenAddr$ in Line 5.  

It is worth to note that if $Repurchase(id)$ has not been triggered, securitized NFTs can be freely traded in blockchain system.

\begin{algorithm}
    \caption{NFT Securitization}
    \label{CNFTsecuritization}
    \algrenewcommand\algorithmicwhile{\textbf{upon}}
    \begin{algorithmic}[1]
        \Procedure{CNFTsecuritization}{}\Comment{Triggered by $sender$}

            \State $require(sender == CNFTownerOf(id))$ \Comment{$sender$ is the owner of $CNFT(id)$}
            \State $totalSupply[id] \gets amount$ \Comment{Record the total amount of units of $SNFT(id)$}
            \State $tokenBalance[id][addr] \gets amount$ \Comment{the $amount$ units $SNFT(id)$ are generated and transferred to address $addr$}
            \State $CNFTtransferFrom(sender, FrozenAddr, id)$ \Comment{Freeze $CNFT(id)$}
        \EndProcedure
    \end{algorithmic}
\end{algorithm}

\subsection{NFT Repurchase Process}

After the securitization process, a complete NFT $CNFT(id)$ is securitized into $M$ units of $SNFT(id)$. Suppose that there are $k+1$ participants, $N=\{N_0,\cdots,N_{k}\}$, each owning $m_i$ units of $SNFT(id)$. Thus $\sum_{i=0}^{k}m_i=M$. If there is one participant, denoted by $N_0$, owing
more than half of $SNFT(id)$ (i.e. $m_0>\frac12 M$), then he can trigger the repurchase process by trading with each $N_i$, $i=1,\cdots,k$. Majority is a natural requirement for a participant to trigger a repurchase mechanism, and thus our repurchase mechanism sets the threshold as $\frac12$. In addition, if the trigger condition is satisfied (i.e., someone holds more than half of shares), then there must be exactly one participant who can trigger the repurchase mechanism. This makes our mechanism easy to implement. Our mechanism also works well if the threshold is larger than $\frac12$.

Let $v_i$ be $N_i$'s value estimate for one unit of $SNFT(id)$ and $p_i$ be the bid provided by $N_i$, $i=0,\cdots,k$, in a deal. Specially, our smart contract $C_{NFT}$ requires each value $v_i\in \{1,\cdots\}$ and bid $p_i\in \{0,1,\cdots\}$ to discretize our analysis. 
We assume that the estimation of $v_i$ is private information of $N_i$, not known to others. The main reason is that most of NFT objects, such as digital art pieces, would be appreciated differently in different eyes. 

Participants may have different opinions about a same NFT, which makes each of them has a private value $v_i$. Without loss of generality, we assume that $N_i$'s private value on the complete NFT is $M\cdot v_i$.

\begin{mechanism}\label{mechanism1}
	{\bf (Repurchase Mechanism)} Suppose participant $N_0$ owes more than half of $SNFT(id)$ and triggers the repurchase mechanism.
 For the repurchase between $N_0$ and $N_i$, $i=1,\cdots,k$,
\begin{itemize}
	\item if $p_0\geq p_i$, then $N_0$ successfully repurchases $m_i$ units of $SNFT(id)$ from $N_i$ at the unit price of $\frac{p_0+p_i}{2}$;
	\item if $p_0\leq p_i-1$, then $N_0$ fails to repurchase, and then he shall sell $m_i$ units of $SNFT(id)$ to $N_i$. The unit price that $N_i$ pays is $\frac{p_0+p_i}{2}$, and $N_0$ obtains a discounted revenue $\frac{p_0+p_i-1}{2}$ for each unit of $SNFT(id)$.
\end{itemize}
\end{mechanism}

Mechanism 1 requires that the repurchase process only happens between $N_0$ and $N_i$, $i=1,\cdots,k$. 
If $p_0\geq p_i$, then $N_0$ successfully repurchases $m_i$ units of $SNFT(id)$ from $N_i$, and the utilities of $N_0$ and $N_i$ are

\begin{eqnarray}\label{uN0}
  U_0^i(p_0,p_i)=m_i(v_0 -\frac{p_0 + p_i}{2} ),~U_i(p_0,p_i)=m_i(\frac{p_0 + p_i}{2}-v_i),~if \;p_0 \geq p_i.
\end{eqnarray}
If $p_0<p_i$, then $N_0$ fails to repurchase from $N_i$, and the utilities of $N_0$ and $N_i$ are
 \begin{eqnarray}\label{uNi}
  U_0^i(p_0,p_i)=m_i(\frac{p_0 + p_i-1}{2}-v_0 ),~U_i(p_0,p_i)=m_i(v_i-\frac{p_0 + p_i}{2}),~if \;p_0 \leq p_i-1.
\end{eqnarray}

All participants must propose their bids rationally under Mechanism 1. If the bid $p_0$ is too low, $N_0$ would face the risk of repurchase failure. Thus, the securities of $N_0$ would be purchased by other participants at a low price, and $N_0$'s utility may be negative. Similarly, if bid $p_i$ of $N_i$, $i=1,\cdots,k$, is too high, $N_i$ would purchase securities with an extra high price and get negative utility. However, if a participant bids truthfully, he always obtains non-negative utility.

During the repurchase process, the key issue for each participant is how to bid  $p_i$, $i=0,\cdots,k$, based on its own value estimation. To solve this issue, we would model the repurchase process as a stackelberg game to explore the equilibrium pricing solution in the following Section 3 to 5.

\subsection{NFT Restruction Process}
Once one participant successfully repurchases all securitized NFTs, he has the right to trigger $CNFTrestruction(addr, id)$, shown in Algorithm \ref{CNFTrestruction}, to burn these securitized NFTs in Line 3 to 4 and unfreeze $CNFT(id)$, such that  the ownership of $CNFT(id)$ would be transferred from address $FrozenAddr$ to this participant's address $addr$ in Line 5.

After NFT restruction, all $SNFT(id)$ are burnt, and $CNFT(id)$ is unfrozen. The owner of $CNFT(id)$ has the right to securitize it or trade it as a whole.

\begin{algorithm}
    \caption{NFT Restruction}
    \label{CNFTrestruction}
    \algrenewcommand\algorithmicwhile{\textbf{upon}}
    \begin{algorithmic}[1]
        \Procedure{CNFTrestruction}{}\Comment{Triggered by $sender$}

            \State $require(tokenBalance[id][sender] == totalSupply[id])$ \Comment{$sender$ should be the owner of all $SNFT(id)$}
            \State $totalSupply[id] \gets 0$ \Comment{Burn all $SNFT(id)$}
            \State $tokenBalance[id][sender] \gets 0$ \Comment{Burn all $SNFT(id)$}
            \State $CNFTtransferFrom(FrozenAddr, addr, id)$ \Comment{Unfreeze $CNFT(id)$}
        \EndProcedure
    \end{algorithmic}
\end{algorithm}

\section{Two-Player Repurchase Stackelberg Game}\label{2pGame}

This section discusses the repurchase process for a two-player scenario. To be specific, in the two-player scenario, when a player owns more than half of $SNFT(id)$, denoted by $N_0$, he will trigger the repurchase process with another player $N_1$.
To explore the optimal bidding strategy for both players, we model the repurchase process as a two-stage Stackelberg game, in which $N_1$ acts as the leader to set its bid $p_1$ in Stage I, and $N_0$, as the follower, decides its bid $p_0$ in Stage II. Recall that all bids and all values are in $\{0,1,\cdots\}$.
\begin{enumerate}
  \item[(1)] \emph{$N_0$'s bidding strategy in Stage II:} Given the bid of $p_1$, set by $N_1$ in Stage I, $N_0$ decides its bid to maximize its utility, which is given as:
\begin{equation}\label{Utility_of_N0}
   U_0(p_0,p_1)=
 \begin{cases}
  m_1(v_0 -\frac{p_0 + p_1}{2} )  & if \;p_0 \geq p_1; \\
  m_1(\frac{p_0 + p_1-1}{2} -v_0)  & if \;p_0 \leq p_1-1.
 \end{cases}
  \end{equation}
 \item[(2)] \emph{$N_1$'s bidding strategy in Stage I:} Once obtain the optimal bid $p_0^*(p_1)$ of $N_0$ in Stage II, which is dependent on $p_1$, $N_1$ goes to compute the optimal bid $p_1^*$ by maximizing his utility function $max_{p_1}U_1(p^*_0(p_1),p_1)$, where 
\begin{equation}\label{Utility_of_N1}
   U_1(p_0,p_1)=
 \begin{cases}
  m_1(\frac{p_0 + p_1}{2}-v_1)  & if \;p_1\leq p_0; \\
  m_1(v_1-\frac{p_0 + p_1}{2})  & if \;p_1\geq p_0+1.
 \end{cases}
  \end{equation}
\end{enumerate}

\subsection{Analysis under Complete Information}

(1) {\bf Best response of $N_0$ in Stage II.} Given the bid $p_1$ provided by $N_1$, in Stage II, $N_0$ shall determine the best response $p_0^*(p_1)$ to maximize his utility.

\begin{lemma}\label{lemma1}
In the two-stage Stackelberg game for repurchase process, if the bid $p_1$ is given in Stage I, the best response of $N_0$ in Stage II is
\begin{equation}
    p_0^*(p_1)=
    \begin{cases}
    p_1 - 1 & if\;p_1\geq v_0+1 \\
    p_1  & if \;p_1 \leq v_0
    \end{cases}
\end{equation}
\end{lemma}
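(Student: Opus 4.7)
The plan is to exploit the fact that $N_0$'s utility function is piecewise linear in $p_0$ on the two regions delineated by the comparison with $p_1$, so I can optimize on each region separately and then compare the two local optima.

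First I would consider the \emph{success} region $p_0 \geq p_1$. Here $U_0 = m_1\bigl(v_0 - \tfrac{p_0+p_1}{2}\bigr)$ is strictly decreasing in $p_0$, so the optimal integer choice is the smallest admissible one, namely $p_0 = p_1$, yielding utility $m_1(v_0 - p_1)$. Next I would consider the \emph{failure} region $p_0 \leq p_1 - 1$. Here $U_0 = m_1\bigl(\tfrac{p_0 + p_1 - 1}{2} - v_0\bigr)$ is strictly increasing in $p_0$, so the optimal integer choice is $p_0 = p_1 - 1$, yielding utility $m_1(p_1 - 1 - v_0)$. Both cases use only that bids are nonnegative integers, and in the failure region we additionally use that $p_1 \geq 1$ so $p_1 - 1$ is a legal bid; the boundary case $p_1 = 0$ is degenerate since the failure region is empty and trivially reduces to Case 1.

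Finally I would compare the two candidate utilities $m_1(v_0 - p_1)$ and $m_1(p_1 - 1 - v_0)$. Their difference is $m_1(2v_0 - 2p_1 + 1)$, which is positive iff $p_1 \leq v_0$ (using integrality, this is equivalent to $v_0 - p_1 \geq 0$, i.e. $2v_0 - 2p_1 + 1 \geq 1 > 0$) and negative iff $p_1 \geq v_0 + 1$ (giving $2v_0 - 2p_1 + 1 \leq -1 < 0$). Hence bidding $p_0 = p_1$ is optimal when $p_1 \leq v_0$ and bidding $p_0 = p_1 - 1$ is optimal when $p_1 \geq v_0 + 1$, matching the stated formula.

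The argument is essentially routine once the monotonicity of the two linear pieces is observed; the only mildly delicate point is the integrality of bids, which forces the optimum in the failure region to be $p_1 - 1$ rather than an unattainable limit approaching $p_1$ from below, and which makes the comparison at the threshold clean because $v_0$ and $p_1$ are integers separated by the strict inequality $p_1 \geq v_0 + 1$. I do not anticipate any substantive obstacle beyond verifying this discretization carefully.
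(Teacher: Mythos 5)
Your proposal is correct and follows essentially the same route as the paper: restrict to the two candidates $p_0=p_1$ and $p_0=p_1-1$ via monotonicity of the two linear pieces, then compare their utilities using integrality of $v_0$ and $p_1$. The only cosmetic differences are that you compare the candidates via their difference rather than via their signs, and you explicitly note the degenerate case $p_1=0$, which the paper leaves implicit.
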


\begin{proof}
According to (\ref{Utility_of_N0}), $U_0$ is monotonically increasing when $p_0 \leq p_1-1$ and monotonically decreasing when $p_0 \geq p_1$. So $p_0^*(p_1)\in\{p_1-1,p_1\}$. In addition, when $p_1\geq v_0+1$, we have
\begin{eqnarray*}
  U_0(p_0=p_1,p_1)=m_1(v_0-p_1)<0\leq m_1(p_1-1-v_0)=U_0(p_0=p_1-1,p_1).
\end{eqnarray*}
It implies that the best response of $N_0$ is $p_0^*(p_1)=p_1-1$ if $p_1\geq v_0+1$.
When $p_1\leq v_0$, we have
\begin{eqnarray*}
  U_0(p_0=p_1,p_1)=m_1(v_0-p_1)\geq 0>m_1(p_1-1-v_0)=U_0(p_0=p_1-1,p_1).
\end{eqnarray*}
So under the situation of $p_0\leq v_0$, the best response of $N_0$ is $p_0^*(p_1)=p_1$. \qed
\end{proof}

\noindent(2) {\bf The optimal strategy of $N_1$ in Stage I.} The leader $N_1$ would like to optimize his bidding strategy to maximize his utility shown in (\ref{Utility_of_N1}).

\begin{lemma}\label{lemma2}
In the two-stage Stackelberg game for repurchase process, the optimal bidding strategy for the leader $N_1$ is
\begin{equation}
   p_1^*=
    \begin{cases}
    v_0 & if\;v_1 \leq v_0 \\
    v_0 + 1  & if \; v_1 \geq v_0+1.
    \end{cases}
\end{equation}
\end{lemma}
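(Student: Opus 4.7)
The plan is to reduce $N_1$'s Stage I problem to a one-dimensional optimization by substituting the follower's best response from Lemma 1 into (\ref{Utility_of_N1}), then to split on the regime of $p_1$ relative to $v_0$. Since bids are integers, every admissible $p_1$ lies either in the regime $p_1 \leq v_0$ or in the regime $p_1 \geq v_0 + 1$, and in each regime Lemma 1 gives a closed form for $p_0^*(p_1)$. So the argument reduces to: evaluate $U_1(p_0^*(p_1), p_1)$ piecewise, find the best $p_1$ within each regime by monotonicity, and then compare the two candidate maxima.

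First I would handle the regime $p_1 \leq v_0$. Here $p_0^*(p_1) = p_1$, so the two bids tie and the ``$p_1 \leq p_0$'' branch of (\ref{Utility_of_N1}) applies, giving $U_1 = m_1(p_1 - v_1)$. This is strictly increasing in $p_1$, so the best candidate in this regime is $p_1 = v_0$, yielding utility $m_1(v_0 - v_1)$. Next I would handle the regime $p_1 \geq v_0 + 1$. Here $p_0^*(p_1) = p_1 - 1$, putting us in the ``$p_1 \geq p_0 + 1$'' branch, which after simplification gives $U_1 = m_1\bigl(v_1 - p_1 + \tfrac{1}{2}\bigr)$. This is strictly decreasing in $p_1$, so the best candidate in this regime is $p_1 = v_0 + 1$, yielding utility $m_1\bigl(v_1 - v_0 - \tfrac{1}{2}\bigr)$.

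Finally I would compare the two candidate utilities. Their difference is $m_1(v_0 - v_1) - m_1(v_1 - v_0 - \tfrac{1}{2}) = m_1(2v_0 - 2v_1 + \tfrac{1}{2})$, whose sign is determined by whether $v_1 \leq v_0$ or $v_1 \geq v_0 + 1$ (integrality removes the ambiguous band). In the former case the first candidate wins, so $p_1^* = v_0$; in the latter case the second candidate wins, so $p_1^* = v_0 + 1$. This matches the statement of the lemma.

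I do not expect a serious obstacle; the only point that deserves care is the discreteness of bids, because it is what guarantees that the two regimes $p_1 \leq v_0$ and $p_1 \geq v_0 + 1$ jointly cover all feasible leader bids and that the comparison between the two regime-optima is strict (no tie-breaking is needed). A secondary sanity check worth noting is that, at the claimed optimum, the resulting utility is non-negative whenever $N_1$ bids truthfully in spirit, consistent with the remark following Mechanism 1 that truthful-style bidding never yields negative utility.
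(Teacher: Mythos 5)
Your proposal is correct and follows essentially the same route as the paper: substitute the follower's best response from Lemma~\ref{lemma1}, observe that $U_1(p_0^*(p_1),p_1)$ is increasing for $p_1\leq v_0$ and decreasing for $p_1\geq v_0+1$, and compare the two candidates $v_0$ and $v_0+1$. The only cosmetic difference is that you compare the candidates via their difference $m_1\bigl(2(v_0-v_1)+\tfrac12\bigr)$, while the paper notes that one candidate's utility is non-negative and the other's is negative in each case; both arguments are valid and equivalent.
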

\begin{proof}
Based on Lemma~\ref{lemma1}, we have
\begin{equation*}\label{2PU1}
   U_1(p_0^*(p_1),p_1)=\\
   \begin{cases}
    m_1(p_1 - v_1) & if \;p_1 \leq v_0; \\
    m_1(v_1 - p_1 + \frac{1}{2})& if \;p_1 \geq  v_0+1.
   \end{cases}
\end{equation*}
Thus $U_1$ is monotonically increasing when $p_1 \leq v_0$ and monotonically decreasing when $p_1 \geq v_0+1$, indicating the optimal bidding strategy $p_1^*\in\{v_0,v_0+1\}$. In addition, for the case of $v_0\geq v_1$, if $p_1=v_0$, then $p_0^*(p_1)=p_1=v_0$ by Lemma \ref{lemma1} and
$
  U_1(v_0,v_0)=m_1(v_0-v_1)\geq 0.
$
On the other hand, if $p_1=v_0+1$, then $p_0^*(p_1)=p_1-1=v_0$ by Lemma \ref{lemma1} and
$
  U_1(v_0,v_0+1)=m_1(v_1-v_0-\frac12)< 0.
$
Therefore, $U_1(v_0,v_0)>U_1(v_0,v_0+1)$, showing the optimal bidding strategy of $N_1$ is $p_1^*=v_0$ when $v_0\geq v_1$. Similarly, for the case of $v_0\leq v_1-1$, we can conclude that $p_1^*=v_0+1$. This lemma holds. \qed
\end{proof}
Combining Lemma~\ref{lemma1} and~\ref{lemma2}, the following theorem can be derived directly.
\begin{theorem}\label{t1}
When $v_0 \geq v_1$, there is exactly one Stackelberg equilibrium where $p_1^* = p_0^* = v_0$. And when $v_0 \leq  v_1-1$, there is exactly one Stackelberg equilibrium where $p_0^*= v_0,~p_1^* = v_0 + 1$.
\end{theorem}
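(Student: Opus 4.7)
The plan is to derive Theorem~\ref{t1} as an immediate corollary by chaining Lemma~\ref{lemma1} (which pins down $N_0$'s best response as a piecewise function of $p_1$) with Lemma~\ref{lemma2} (which identifies $N_1$'s optimal leading bid $p_1^*$). So I would simply substitute the value of $p_1^*$ coming from Lemma~\ref{lemma2} into the formula for $p_0^*(p_1)$ provided by Lemma~\ref{lemma1}, handling the two cases of Lemma~\ref{lemma2} separately.

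First I would take the case $v_0 \geq v_1$. Lemma~\ref{lemma2} gives $p_1^* = v_0$. Since this value satisfies $p_1^* \leq v_0$, the second branch of Lemma~\ref{lemma1} applies and yields $p_0^*(p_1^*) = p_1^* = v_0$. This produces the equilibrium pair $(p_0^*, p_1^*) = (v_0, v_0)$ claimed in the theorem. Next I would handle the case $v_0 \leq v_1 - 1$, where Lemma~\ref{lemma2} gives $p_1^* = v_0 + 1$. Since $p_1^* \geq v_0 + 1$, the first branch of Lemma~\ref{lemma1} applies and yields $p_0^*(p_1^*) = p_1^* - 1 = v_0$, giving the second equilibrium $(p_0^*, p_1^*) = (v_0, v_0 + 1)$.

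For uniqueness, I would observe that all the comparisons used inside the two lemmas are strict. In Lemma~\ref{lemma1}, the two candidates $p_0 \in \{p_1 - 1, p_1\}$ are separated by a strict inequality in each regime (the two utilities differ by exactly $m_1$ once we set $p_1 = v_0$ or $p_1 = v_0 + 1$), so the follower's best response is unique. In Lemma~\ref{lemma2}, $U_1(v_0, v_0)$ beats $U_1(v_0, v_0 + 1)$ by strictly more than zero under $v_0 \geq v_1$ (and symmetrically in the other case), so the leader's optimal bid is also unique. Thus exactly one Stackelberg equilibrium arises in each case, matching the statement.

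Since the theorem is essentially a mechanical composition of the two lemmas, I do not anticipate a real obstacle; the only point worth double-checking is that the knife-edge instance $v_0 = v_1$ still lands in the first case and yields $p_0^* = p_1^* = v_0$ with strict preference over deviating to $p_1 = v_0 + 1$, which the computation $U_1(v_0, v_0+1) = m_1(v_1 - v_0 - \tfrac12) = -m_1/2 < 0 = U_1(v_0, v_0)$ in Lemma~\ref{lemma2} already confirms.
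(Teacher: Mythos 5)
Your proposal is correct and follows essentially the same route as the paper, which derives Theorem~\ref{t1} directly by composing Lemma~\ref{lemma1} with Lemma~\ref{lemma2}; your case-by-case substitution and the strictness observations for uniqueness are exactly the intended (and valid) elaboration of that one-line derivation.
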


Furthermore, the following theorem demonstrates the relation between Stackelberg equilibrium and Nash equilibrium.
\begin{theorem}\label{SEalsoNE}
Each Stackelberg equilibrium in Theorem \ref{t1} is also a Nash equilibrium.
\end{theorem}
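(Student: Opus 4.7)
The plan is a direct Nash-equilibrium verification: for each of the two equilibrium profiles identified in Theorem~\ref{t1}, I check that neither player gains by unilaterally deviating while the other's bid is held fixed at its equilibrium value. There are two equilibria (one per case of comparing $v_0$ and $v_1$), and for each I check two deviation directions per player (crossing and not crossing the threshold $p_0=p_1$ that switches the branch of the utility). So the argument reduces to a finite case check built directly on (\ref{Utility_of_N0}) and (\ref{Utility_of_N1}).

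The deviation check for $N_0$ is immediate and needs no new computation. Lemma~\ref{lemma1} already characterizes $p_0^*(p_1)$ as the global maximizer of $U_0(\,\cdot\,,p_1)$ over all bids, for every $p_1$. In particular, at the Stackelberg profile $(p_0^*,p_1^*)$ we have $p_0^*=p_0^*(p_1^*)$ by construction, so no unilateral deviation by $N_0$ can strictly raise $U_0$.

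The substantive step is the deviation check for $N_1$. This is not free from Lemma~\ref{lemma2}, because that lemma maximizes the composed function $U_1(p_0^*(p_1),p_1)$, i.e.\ assuming $N_0$ reoptimizes in response, whereas Nash fixes $p_0$ at $p_0^*$. I would therefore maximize $U_1(p_0^*,p_1)$ directly over $p_1\in\{0,1,\dots\}$ using (\ref{Utility_of_N1}): on the branch $p_1\leq p_0^*$ the utility is linear and increasing in $p_1$ with peak at $p_1=p_0^*$, and on the branch $p_1\geq p_0^*+1$ it is linear and decreasing with peak at $p_1=p_0^*+1$. Comparing these two candidate peak values against $v_1$ then selects the dominant branch, and I would verify that the dominant branch coincides with the $p_1^*$ prescribed by Theorem~\ref{t1} in each of the two cases $v_0\geq v_1$ and $v_0\leq v_1-1$.

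The only delicate point, and the place I would be most careful, is how the $\tfrac12$ discount built into Mechanism~\ref{mechanism1} interacts with the integrality of bids. In the case $v_0\geq v_1$ the low-branch peak equals $m_1(v_0-v_1)\geq 0$ while the high-branch peak equals $m_1(v_1-v_0-\tfrac12)\leq -m_1/2$, so the low branch strictly dominates; in the case $v_0\leq v_1-1$ the comparison reverses and the high branch strictly dominates. This strict gap, which comes precisely from the $\tfrac12$ discount for failed repurchase, is exactly what prevents $N_1$ from profitably hopping across the threshold $p_1=p_0^*$ and is what makes both Stackelberg equilibria Nash-stable.
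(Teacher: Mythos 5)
Your proposal is correct and follows essentially the same route as the paper's proof in Appendix A: the deviation check for $N_0$ is dispatched by Lemma~\ref{lemma1}, and the deviation check for $N_1$ is done by maximizing $U_1(v_0,\cdot)$ directly over both branches, with the peaks $m_1(v_0-v_1)$ and $m_1(v_1-v_0-\tfrac12)$ compared exactly as the paper does. Your remark that Lemma~\ref{lemma2} cannot be reused here because it optimizes the composed function $U_1(p_0^*(p_1),p_1)$ rather than $U_1$ at fixed $p_0$ is precisely the point that makes the appendix proof necessary.
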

The proof of Theorem~\ref{SEalsoNE} is provided in Appendix A.

\subsection{Analysis of Bayesian Stackelberg Equilibrium}\label{2PBSE}

In the previous subsection, the Stackelberg equilibrium is deduced based on the complete information about the value estimate $v_0$ and $v_1$. However, the value estimates may be private in practice, which motivates us to study the Bayesian Stackelberg game with incomplete information. In this proposed game, although the value estimate $v_i$ is not known to others, except for itself $N_i$, $i=0,1$, the probability distribution of each $V_i$ is public to all. Here we use $V_i$ to denote the random variable of value estimate. Based on the assumption that all $V_i$ are integers, we continue to assume that each $N_i$'s value estimate $V_i$ has finite integer states, denoted by $v_i^1,v_i^2,\cdots, v_i^{k_i}$, and its discrete probability distribution is $Pro(V_i=v_i^l)=P_i^l$, $l=1,\cdots,k_i$, and $\sum_{l=1}^{k_i}P_i^l=1$, $i=0,1$.

\noindent(1) {\bf Best response of $N_0$ in Stage II.} Because $v_0$ is deterministic to $N_0$, and $p_1$ is given by $N_1$ in Stage I, Lemma~\ref{lemma1} still holds, so
\begin{equation*}
    p_0^*(p_1)=
    \begin{cases}
    p_1-1 & if\;p_1\geq v_0+1; \\
    p_1  & if \;p_1 \leq v_0.
    \end{cases}
\end{equation*}

\noindent(2) {\bf Optimal bidding strategy of $N_1$ in Stage I.}
By Lemma~\ref{lemma1}, we have
\begin{equation*}
   U_1(p_0^*(p_1),p_1)=\\
   \begin{cases}
    m_1 (p_1 - v_1) & if \;p_1 \leq v_0; \\
    m_1 (v_1 - p_1 + \frac{1}{2})& if \;p_1 \geq v_0+1.
   \end{cases}
\end{equation*}

Based on the probability distribution of $V_0$, the expected utility of $U_1$ is:
\begin{eqnarray}\label{expected utility}
  E_1(p_1)&=\sum_{v_0^l\geq p_1}m_1(p_1-v_1)P_0^l+\sum_{v_0^l\leq p_1-1}m_1(v_1-p_1+\frac12)P_0^l
\end{eqnarray}
To be specific, if $p_1\geq v_0^{k_0}+1$, then $E_1(p_1)=m_1(v_1-p_1+\frac12)$, and $N_1$ obtains his maximal expected utility at $p_1^*=v_0^{k_0}+1$. If $p_1\leq v_0^1$, then $E_1(p_1)=m_1(p_1-v_1)$, and $N_1$ obtains his maximal expected utility at $p_1^*=v_0^{1}$. If there exists an index $l$, such that $v_0^{l-1}<p_1\leq v_0^l$, $l=2,\cdots,k_0$, then
\begin{equation*}
    E_1(p_1)=\sum_{h=1}^{l-1}m_1(v_1-p_1+\frac12)P_0^h+\sum_{h=l}^{k_0}m_1(p_1-v_1)P_0^h.
\end{equation*}
Therefore, $N_1$ can obtain his maximal expected utility at $p_1^*=v_0^l$, when $\sum_{h=l}^{k_0}P_0^h\geq \sum_{h=1}^{l-1}P_0^h$. Otherwise, $N_1$'s maximal expected utility is achieved at $p_1^*=v_0^{l-1}+1$. Hence, the optimal bid $p_1^*\in \{v_0^l,~v_0^l+1\}_{l=1,\cdots,k_0}$.

\begin{theorem}\label{BSE}
There is a Stackelberg equilibrium in the Bayesian Stackelberg game.
\begin{enumerate}
  \item [(1)] If $p_1^*\leq v_0$, then $p_0=p_1^*$ and $p_1=p_1^*$ is a Stackelberg equilibrium.
  \item [(2)] If $p_1^*\geq v_0+1$, then $p_0=p_1^*-1$ and $p_1=p_1^*$ is a Stackelberg equilibrium.
\end{enumerate}
\end{theorem}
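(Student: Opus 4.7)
The plan is to assemble the theorem from two ingredients that are already in place: Lemma~\ref{lemma1}, which tells us the follower's best response for any bid of the leader, and the explicit derivation of $p_1^*$ in the paragraph preceding the theorem, which identifies the argmax of the expected utility $E_1(p_1)$. By the definition of a (Bayesian) Stackelberg equilibrium, I just need to check that (i) $p_1^*$ maximizes the leader's expected utility given that the follower plays best response, and (ii) the proposed follower bid is indeed $p_0^*(p_1^*)$.

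First I would formally state that the follower's decision problem is independent of the Bayesian setting, since $N_0$ observes $p_1$ and knows his own $v_0$; thus Lemma~\ref{lemma1} applies verbatim, yielding $p_0^*(p_1) = p_1$ when $p_1 \le v_0$ and $p_0^*(p_1) = p_1 - 1$ when $p_1 \ge v_0 + 1$. Substituting into $U_1$ gives the piecewise expression already displayed, and taking expectation over $V_0$ gives $E_1(p_1)$ in~(\ref{expected utility}). Next I would argue, by the case analysis already sketched in the text, that $E_1$ is piecewise linear in $p_1$ with breakpoints at the support $\{v_0^1, \ldots, v_0^{k_0}\}$; on each interval $(v_0^{l-1}, v_0^l]$ it is linear, so the maximum over integers is attained at one of the points $v_0^l$ or $v_0^{l-1}+1$. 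Comparing the two candidates on each interval using the condition $\sum_{h=l}^{k_0} P_0^h \ge \sum_{h=1}^{l-1} P_0^h$, and also handling the boundary cases $p_1 \le v_0^1$ and $p_1 \ge v_0^{k_0}+1$, establishes that the optimal leader bid $p_1^*$ lies in $\{v_0^l,\, v_0^l+1\}_{l=1,\ldots,k_0}$ and achieves the global maximum of $E_1$.

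With $p_1^*$ in hand, part (1) and part (2) of the theorem follow immediately by plugging $p_1^*$ into Lemma~\ref{lemma1}: if $p_1^* \le v_0$ the follower's best response is $p_0 = p_1^*$, and if $p_1^* \ge v_0 + 1$ the follower's best response is $p_0 = p_1^* - 1$. Together with the maximality of $p_1^*$ in Stage~I, this is exactly the defining property of a Stackelberg equilibrium, so the pair $(p_0, p_1^*)$ is one as claimed.

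The main obstacle, in my view, is not conceptual but bookkeeping: justifying rigorously that the leader's expected utility, which is piecewise linear with $k_0$ breakpoints, attains its integer maximum at a point of the explicitly described finite set. The comparison $\sum_{h=l}^{k_0} P_0^h \ge \sum_{h=1}^{l-1} P_0^h$ has to be verified carefully at each breakpoint, and one must remember to include the boundary regions $p_1 \le v_0^1$ and $p_1 \ge v_0^{k_0}+1$ where $E_1$ is globally linear. Once this discrete optimization is handled, invoking Lemma~\ref{lemma1} to recover $p_0$ is mechanical.
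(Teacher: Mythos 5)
Your proposal is correct and follows essentially the same route as the paper, which proves this theorem implicitly via the derivation preceding it: Lemma~\ref{lemma1} applies unchanged to the follower (who knows $v_0$ and observes $p_1$), the leader's expected utility $E_1(p_1)$ in~(\ref{expected utility}) is piecewise linear with its integer maximum attained in the finite candidate set $\{v_0^l,\,v_0^l+1\}_{l=1,\ldots,k_0}$, and the equilibrium follows by pairing the maximizing $p_1^*$ with the follower's best response. The bookkeeping you flag (the breakpoint comparison $\sum_{h=l}^{k_0}P_0^h\geq\sum_{h=1}^{l-1}P_0^h$ and the two boundary regions) is exactly the case analysis the paper carries out in the text before the theorem statement.
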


\section{Repeated Two-Player Stackelberg Game}\label{2PRG}
This section would extend the study of the one-round Stackelberg game in the previous section to the repeated Stackelberg game. Before our discussion, we construct the basic model of a repeated two-player Stackelberg game by introducing the necessary notations.
\begin{definition}
Repeated two-player Stackelberg repurchase game is given by a tuple $G_r = (M, N,V,S,L,P,U)$, where:
\end{definition}
\begin{itemize}
\item $N = \{N_0,N_1\}$ is the set of two participants. The role of being a leader or a follower may change in the whole repeated process.
\item $M$ is the total amount of $SNFT(id)$. W.l.o.g., we assume that $M$ is odd, such that one of $\{N_0,N_1\}$ must have more than half of $SNFT(id)$.
\item $V = \{v_0,v_1\}$ is the set of participants' value estimates. Let $v_i \in \{1,2,3,\cdots\}$ be an integer.
\item $S = \{s^1,s^2,\cdots, s^t, z\}$ is the set of sequential states. $s^j = (m_0^j,m_1^j)$, in which $m_0^j,m_1^j> 0$ are integers, $m_0^j + m_1^j= M$, and $m_0^j\neq m_1^j$ because $M$ is odd. $z \in Z = \{z_0,z_1\}$ represents the terminal state, where $z_0 = (M,0), z_1 = (0,M)$.  If the sequential states are infinity, then $t = +\infty$. Let us denote $(m_0^{t+1},m_1^{t+1}) = z$.
\item $L = \{l^1,l^2,\cdots,l^t\}$ is the set of sequential leaders, where $l^j$ is the leader in the $j$-th round. To be specific, $l^j=N_1$, if $m_0^j > m_1^j$; otherwise, $l^j=N_0$. It shows the participant who triggers the repurchase process in each round should be the follower.
\item $P_i = \{p_i^1, p_i^2,\cdots,p_i^t\}$ is the set of sequential prices bidded by $N_i$, $p_i^j \in \{0,1,2,\cdots\}$.
\item $U_i:S \times P_0 \times P_1 \xrightarrow[]{}R$ is the utility function of player $N_i$ in a single round. The detailed expressions of $U_i$ will be proposed later.
\end{itemize}
In practice, $v_i$, $i=0,1$, may not be common information. However, we can extract them from the historical interaction data of the repeated game by online learning \cite{weed2016online} or reinforcement learning \cite{li2019latent} methods. Therefore, we mainly discuss the case with complete information in this section.

\noindent{\bf Repeated Stackelberg Game Procedure}
Repeated game $G_r$ consists of several rounds, and each round contains two stages. In the $j$-th round,
\begin{itemize}
    \item In \textbf{Stage I}, the leader provides a bid $p_i^j \in \{0,1,\cdots\}$.
    \item In \textbf{Stage II}, the follower provides a bid $p_{1-i}^j \in \{0,1,\cdots\}$.
    \item If $p_i^j \leq p_{1-i}^j$, $N_{1-i}$ successfully purchased $m_i^j$ units of $SNFT(id)$ from $N_i$ at the unit price of $\frac{p_i^j + p_{1-i}^j}2$.
    \item If $p_i^j \geq p_{1-i}^j+1$, $N_i$ purchases $m_i^j$ units of $SNFT(id)$ from $N_{1-i}$ at the unit price of $\frac{p_i^j + p_{1-i}^j}2$. And $N_{1-i}$ only obtains a discounted revenue $m_i^j\cdot \frac{p_i^j + p_{1-i}^j-1}2$.
\end{itemize}
The whole game process is shown in Figure~\ref{fig:2PRG}. Based on the description for the $j$-th round of repeated game, the utilities of $N_0$ and $N_1$ are
\begin{equation}\label{2pru0}
    U_0(m_0^j,m_1^j,p_0^j,p_1^j)=
    \begin{cases}
    (v_0 - (p_0^j + p_1^j)/2)m_1^j & if\; p_0 ^j\geq p_1^j, m_0^j > m_1^j; \\
    ( (p_0^j + p_1^j - 1)/2 - v_0)m_1^j & if\; p_0^j < p_1^j, m_0^j > m_1^j;\\
    ( (p_0^j + p_1^j)/2  - v_0)m_0^j & if \;p_1^j \geq p_0^j, m_0^j<m_1^j; \\
    (v_0 - (p_1^j + p_0^j)/2)m_0^j& if \;p_1^j < p_0^j,m_0^j<m_1^j;
    \end{cases}
\end{equation}
\begin{equation}\label{2pru1}
    U_1(m_0^j,m_1^j,p_0^j,p_1^j)=
    \begin{cases}
    (  (p_0^j + p_1^j)/2  - v_1)m_1^j& if \;p_0^j \geq p_1^j,m_0^j>m_1^j; \\
    (v_1 - (p_0^j + p_1^j)/2)m_1^j& if \;p_0^j < p_1^j,m_0^j>m_1^j;\\
    (v_1 - (p_0^j + p_1^j)/2)m_0^j& if \;p_1^j \geq p_0^j, m_0^j<m_1^j; \\
    ( (p_0^j + p_1^j-1)/2  - v_1) m_0^j& if \;p_1^j < p_0^j,m_0^j<m_1^j.
    \end{cases}
\end{equation}
Both participants are interested in their total utilities in the whole process
\begin{equation*}
  U_i = \sum_{j\in \{1,2,\cdots,t\}} U_i(m_0^j,m_1^j,p_0^j,p_1^j).  
\end{equation*}

\begin{figure}[t]
    \centering
    \includegraphics[width=5in]{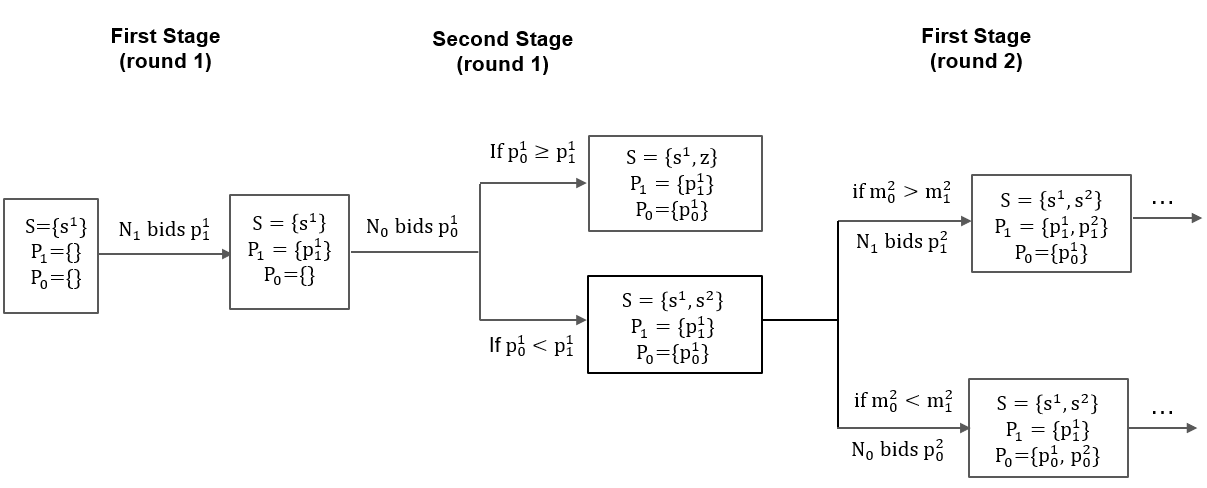}
    \caption{Two-player repeated repurchase Stackelberg Game.}
    \label{fig:2PRG}
\end{figure}

\begin{lemma}\label{lemma:tru}
For each participant $N_i$, $i\in \{0,1\}$, if his bid is set as $p_i^j = v_i$ in the $j$-th round, $j\in \{1,2,\cdots,t\}$, then $U_i(m_0^j,m_1^j,p_0^j,p_1^j)\geq 0$.
\end{lemma}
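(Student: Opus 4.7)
The plan is a direct case analysis on the piecewise definitions of $U_0$ and $U_1$ in equations (\ref{2pru0}) and (\ref{2pru1}). Each utility has four branches, indexed by the ownership comparison $m_0^j$ versus $m_1^j$ and the bid comparison $p_0^j$ versus $p_1^j$. I would simply substitute $p_i^j = v_i$ into each of the four branches for player $N_i$ and verify that the resulting expression is nonnegative under the accompanying branch condition. By the obvious symmetry between $N_0$ and $N_1$ in (\ref{2pru0})--(\ref{2pru1}), it suffices to do the computation for $N_0$; the argument for $N_1$ is identical after swapping indices.

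Concretely, for $N_0$ with $p_0^j = v_0$, the four cases collapse as follows. When $N_0$ is the majority holder ($m_0^j > m_1^j$) and wins the bid ($p_0^j \geq p_1^j$), the utility reduces to $\tfrac{m_1^j}{2}(v_0 - p_1^j)$, which is nonnegative since $v_0 = p_0^j \geq p_1^j$. When $N_0$ is the minority holder ($m_0^j < m_1^j$) and the leader bids higher ($p_1^j \geq p_0^j$), the utility is $\tfrac{m_0^j}{2}(p_1^j - v_0)$, again nonnegative for the same reason. These are the two ``success'' branches and require only the branch inequality.

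The two remaining branches are the ``failure'' branches, in which the extra $-\tfrac{1}{2}$ discount appears in the unit price. For example, with $m_0^j > m_1^j$ and $p_0^j < p_1^j$, substituting $p_0^j = v_0$ gives $\tfrac{m_1^j}{2}(p_1^j - 1 - v_0)$. Here the subtlety — and the one point that is slightly more than a one-line check — is that I need $p_1^j - 1 \geq v_0$, not merely $p_1^j > v_0$. This is precisely where the integrality hypothesis $p_i^j, v_i \in \{0,1,2,\dots\}$ built into the model is used: $p_1^j > v_0$ combined with integrality forces $p_1^j \geq v_0 + 1$, so $p_1^j - 1 \geq v_0$ and the expression is nonnegative. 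The mirror failure branch ($m_0^j < m_1^j$, $p_1^j < p_0^j$) handles identically, with integrality applied to $p_1^j < v_0$.

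The main (mild) obstacle is therefore keeping the bookkeeping straight across the eight cases (four for $N_0$, four for $N_1$) and recognizing that integrality is essential — without the discreteness of bids, truthful bidding could yield a strictly negative utility of $-\tfrac{m}{2}\cdot(\text{small})$ on the failure branches. Once this is flagged, the proof is a finite verification and no further ideas are needed.
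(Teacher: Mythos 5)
Your proof is correct and takes essentially the same route as the paper, which simply asserts that the lemma ``can be directly deduced from (\ref{2pru0}) and (\ref{2pru1})'': your eight-case substitution is exactly that deduction, and you correctly isolate the one nontrivial point, namely that integrality of bids and values is what makes the discounted branch $\tfrac{m}{2}(p-1-v)$ nonnegative. One tiny imprecision that does not affect correctness: for a fixed player only one of the two ``failure'' branches actually carries the $-\tfrac{1}{2}$ discount (the one in which that player is the seller receiving the discounted revenue), so in the mirror branch for $N_0$ ($m_0^j<m_1^j$, $p_1^j<p_0^j$) the utility is $\tfrac{m_0^j}{2}(v_0-p_1^j)$ and nonnegativity already follows from $p_1^j<v_0$ without invoking integrality.
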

Lemma~\ref{lemma:tru} can be directly deduced from (\ref{2pru0}) and (\ref{2pru1}).

\begin{lemma}\label{lemma:forever}
If the repeated game goes through indefinitely, that is $t = +\infty$,
then $U_0 + U_1 = -\infty$.
\end{lemma}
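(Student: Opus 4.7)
The plan is to exploit two structural facts: successful repurchases drive the game to a terminal state in a single round, so an infinite game must consist entirely of failed repurchases; and each failed repurchase leaks a fixed half-unit of surplus per unit traded, which accumulates without bound.

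First I would inspect cases 1 and 3 of the utility definitions (\ref{2pru0})--(\ref{2pru1}). In case 1 ($m_0^j > m_1^j$, $p_0^j \geq p_1^j$), $N_0$ acquires all of $N_1$'s $m_1^j$ shares, so the state moves to $z_0 = (M, 0)$; case 3 is symmetric and sends the state to $z_1 = (0, M)$. Either of these terminates the game, so if $t = +\infty$ then every round $j$ must land in case 2 or case 4, i.e.\ the would-be repurchaser fails and is forced to sell at a discount.

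Next I would compute the per-round joint utility in these two persistent cases. Adding the matching entries in (\ref{2pru0}) and (\ref{2pru1}) gives
\begin{equation*}
U_0^j + U_1^j =
\begin{cases}
(v_1 - v_0)\, m_1^j - \tfrac{1}{2} m_1^j, & \text{case 2}, \\
(v_0 - v_1)\, m_0^j - \tfrac{1}{2} m_0^j, & \text{case 4}.
\end{cases}
\end{equation*}
Using the induced transitions $(m_0^{j+1}, m_1^{j+1}) = (m_0^j - m_1^j,\, 2 m_1^j)$ in case 2 and $(2 m_0^j,\, m_1^j - m_0^j)$ in case 4, both expressions collapse into the single identity
\begin{equation*}
U_0^j + U_1^j = v_0 (m_0^{j+1} - m_0^j) + v_1 (m_1^{j+1} - m_1^j) - \tfrac{1}{2} \min(m_0^j, m_1^j).
\end{equation*}

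Finally I would sum from $j = 1$ to $T$ and let $T \to \infty$. The value increments telescope to $v_0(m_0^{T+1} - m_0^1) + v_1(m_1^{T+1} - m_1^1)$, which stays inside a fixed finite interval because no intermediate state is terminal and therefore $0 < m_i^j < M$ for every $j$. The accumulated discount $\tfrac{1}{2} \sum_{j=1}^T \min(m_0^j, m_1^j)$ grows at least as fast as $T/2$, because $M$ odd together with both holdings being strictly positive integers forces $\min(m_0^j, m_1^j) \geq 1$. Hence $\sum_{j=1}^T (U_0^j + U_1^j)$ is a bounded quantity minus a term diverging to $+\infty$, giving $U_0 + U_1 = -\infty$.

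The only real subtlety, I expect, is verifying that the case-2 and case-4 expressions genuinely unify into the single telescoping formula above; once that algebraic identity is confirmed, the rest is a routine comparison of a bounded term with a linearly divergent loss.
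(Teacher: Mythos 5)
Your proposal is correct and follows essentially the same route as the paper: rule out the terminal cases, observe that the per-round joint utility equals a telescoping value-transfer term minus $\tfrac12\min(m_0^j,m_1^j)\le-\tfrac12$, and conclude that the bounded telescoped sum minus a linearly divergent loss tends to $-\infty$. Your unified identity $v_0(m_0^{j+1}-m_0^j)+v_1(m_1^{j+1}-m_1^j)$ is the same as the paper's $(v_0-v_1)(m_0^{j+1}-m_0^j)$ since $m_0^j+m_1^j=M$ is constant.
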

\begin{proof} For the $j$-th round, let $N_i=l^j$ be the leader and thus $N_{1-i}$ is the follower.
Since there are only two players, all $SNFT(id)$ will belong to one player, if the follower can successfully repurchase $SNFT(id)$ from the leader, and then the repeated game stops. It means that in the $j$-th round, $m_i^j$ units of $SNFT(id)$ is bought by $N_{1-i}$ from $N_i$ and the game stops at the terminal state $z_{1-i}$. So if the repeated game goes through indefinitely, 
it must be that in each $j\in \{1,2,\cdots\}$, $p_i^j>p_{1-i}^j$, and $N_i$ buys $m_i^j$ from $N_{1-i}$. Thus in the $j+1$-th round, $m_i^{j+1}=2m_i^j$.
\begin{eqnarray}
  &&U_0(m_0^j,m_1^j,p_0^j,p_1^j)+U_1(m_0^j,m_1^j,p_0^j,p_1^j)=\nonumber\\
   &&\begin{cases}
    (v_0 - v_1)m_0^j-\frac12m_0^j=(v_0-v_1)(m_0^{j+1}-m_0^j)-\frac12m_0^j & \text{if } N_0 \text{ is the leader}; \\
    (v_1 - v_0)m_1^j-\frac12m_1^j=(v_1-v_0)(m_1^{j+1}-m_1^j)-\frac12m_1^j & \text{if } N_1 \text{ is the leader};
    \end{cases}\nonumber\\
   %U_0(m_0^j,m_1^j,p_0^j,p_1^j) + U_1(m_0^j,m_1^j,p_0^j,p_1^j) &=& (m_0^{j+1} - m_0^j) (v_0 -v_1) - \frac{1}{2}m_i^j;\\
   &&\leq (m_0^{j+1} - m_0^j) (v_0 -v_1) - \frac{1}{2};\label{sum}
 \end{eqnarray}
 and
\begin{eqnarray*}
    U_0 + U_1 &=&\lim_{t\rightarrow +\infty}\sum_{j = \{1,2,\cdots,t\}} U_0(m_0^j,m_1^j,p_0^j,p_1^j) +  U_1(m_0^j,m_1^j,p_0^j,p_1^j)\\
    &\leq &  \lim_{t\rightarrow +\infty}\sum_{j = \{1,2,\cdots,t\}} \left[(m_0^{j+1} - m_0^j)(v_0 -v_1) - \frac{1}{2}\right]\\
    &=& \lim_{t\rightarrow +\infty}\left[(m_0^{t+1} - m_0^1)(v_0 -v_1) - \frac{1}{2}t\right] \leq M|v_0 -v_1| - \lim_{t\rightarrow +\infty}\frac{1}{2}t   = - \infty.
\end{eqnarray*}
This result holds. \qed
\end{proof}

Combining Lemma~\ref{lemma:tru} and Lemma~\ref{lemma:forever}, we have the following conclusion.

\begin{lemma}\label{lemma:nonneg}
If there is a Stackelberg equilibrium in the two-player repeated Stackelberg game, then $U_0 + U_1 \geq 0$ in this Stackelberg equilibrium.
\end{lemma}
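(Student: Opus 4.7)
The plan is to proceed by contradiction, letting the always-truthful bidding strategy serve as a safe outside option. Suppose, for contradiction, that there is a Stackelberg equilibrium in which $U_0 + U_1 < 0$. Then at least one of the two total utilities must be strictly negative; after relabelling if necessary, say $U_i < 0$ for player $N_i$ at this equilibrium.

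The next step is to exhibit a deviation that strictly improves $N_i$'s payoff. Consider the strategy under which $N_i$ submits $p_i^j = v_i$ in every round $j$, regardless of the current state $(m_0^j, m_1^j)$, regardless of whether $N_i$ is the leader or the follower in that round, and regardless of what $N_{1-i}$ does. Lemma~\ref{lemma:tru} applies round by round and guarantees that $N_i$'s per-round utility under this strategy is non-negative; summing over all rounds (finitely or infinitely many), the total utility of $N_i$ under this deviation is at least $0$, which strictly exceeds the assumed equilibrium payoff $U_i < 0$. This contradicts the equilibrium property that $N_i$'s chosen strategy is a best response against $N_{1-i}$'s equilibrium strategy, completing the proof.

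Lemma~\ref{lemma:forever} plays a supporting role in this argument: it closes the infinite-horizon loophole. If one worries that the equilibrium play might never terminate, that lemma forces $U_0 + U_1 = -\infty$, so at least one player's total is unambiguously negative and the same deviation argument still delivers a contradiction. The main thing to verify carefully is that the always-truthful deviation is well-defined across a repeated game in which the leader and follower roles can swap; this is immediate because Lemma~\ref{lemma:tru} is stated uniformly in the role, so setting $p_i^j = v_i$ in every round is a valid bidding rule and yields the required non-negativity irrespective of the trajectory of states and opposing bids. The only genuine obstacle is making sure that the notion of ``best response'' used implicitly in the definition of Stackelberg equilibrium here permits comparison with any alternative round-by-round bidding rule, which is standard for the sequential game model set up in this section.
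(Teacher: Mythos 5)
Your proof is correct and follows essentially the same route as the paper: assume $U_0+U_1<0$, note some $U_i<0$, and observe that the always-truthful bid $p_i^j=v_i$ yields non-negative utility in every round by Lemma~\ref{lemma:tru}, contradicting best response. Your extra remarks on the infinite-horizon case and role-swapping only make explicit what the paper leaves implicit.
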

\begin{proof}
Suppose to the contrary that $U_0 + U_1< 0$ in this Stackelberg equilibrium, then there must exist $i\in \{0,1\}$, such that $U_i<0$. However, by Lemma~\ref{lemma:tru}, we know that if each player sets its price as $p_i^j=v_i$, then its utility $U_i^j\geq 0$. Hence $N_i$ can obtain more utility by setting $p_i^j=v_i$,
which is a contradiction that $N_i$ doesn't give the best response in this Stackelberg equilibrium. \qed
\end{proof}

\begin{lemma}\label{lemma:finite}
If there is a Stackelberg equilibrium in the two-player repeated Stackelberg game, then the repeated game stops in a finite number of steps, meaning $t < +\infty$, in this Stackelberg equilibrium.
\end{lemma}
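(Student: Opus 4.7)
The plan is to prove Lemma \ref{lemma:finite} by direct contradiction, leveraging the two preceding lemmas as a matched pair of upper and lower bounds on $U_0+U_1$ at a Stackelberg equilibrium. Specifically, I would assume toward a contradiction that a Stackelberg equilibrium exists in which the repeated game never terminates, i.e.\ $t=+\infty$, and then derive an impossible inequality on the sum of utilities.

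First I would invoke Lemma \ref{lemma:nonneg}, which guarantees that in any Stackelberg equilibrium of the repeated game, the total payoff satisfies $U_0+U_1\geq 0$; this is the lower bound that any equilibrium must respect, since otherwise one player could profitably deviate by bidding truthfully in every round as in Lemma \ref{lemma:tru}. Next I would apply Lemma \ref{lemma:forever} to the assumed non-terminating equilibrium, which forces $U_0+U_1=-\infty$ whenever the game proceeds indefinitely; the key mechanism there was that indefinite continuation requires the leader to purchase from the follower in every round, and each such round incurs the constant $-\tfrac12 m^j$ loss captured in the bound~(\ref{sum}).

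Combining these two facts yields the contradiction $0\leq U_0+U_1=-\infty$, which is absurd. Hence no Stackelberg equilibrium can have $t=+\infty$, and the game must terminate in a finite number of rounds, proving the claim. I do not anticipate a genuine obstacle here, since the lemma is essentially a corollary obtained by sandwiching $U_0+U_1$ between the two previously established bounds; the only subtlety worth stating explicitly is that Lemma \ref{lemma:forever}'s hypothesis (that the game proceeds indefinitely) is exactly the hypothesis being contradicted, so the logical chain goes through cleanly without any additional case analysis on who leads in which round.
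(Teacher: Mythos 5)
Your proposal is correct and is exactly the argument the paper intends: the paper states Lemma~\ref{lemma:finite} without an explicit proof precisely because it follows immediately by combining Lemma~\ref{lemma:nonneg} (in any equilibrium $U_0+U_1\geq 0$) with Lemma~\ref{lemma:forever} (if $t=+\infty$ then $U_0+U_1=-\infty$), yielding the contradiction you describe. No gaps; your remark that Lemma~\ref{lemma:forever} holds unconditionally, so it may be applied under the contradiction hypothesis, is the only subtlety and you handle it correctly.
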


The following theorem states that once a Stackelberg equilibrium exists and  $v_i>v_{1-i}$, then this player $N_i$ must buy all $SNFT(id)$ at last.
\begin{theorem}\label{theorem:maxwellfare}
If $v_i > v_{1-i}$, $i=0,~1$, and a Stackelberg equilibrium exists, then $z = z_{i}$, in all Stackelberg equilibria.
\end{theorem}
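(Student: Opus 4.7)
My approach is proof by contradiction, leveraging the welfare nonnegativity at equilibrium guaranteed by Lemma~\ref{lemma:nonneg}. Assume without loss of generality that $v_0 > v_1$, so $\Delta := v_0 - v_1 \geq 1$. By Lemma~\ref{lemma:finite} any Stackelberg equilibrium terminates after some finite number of rounds $t$, and by Lemma~\ref{lemma:nonneg} it must satisfy $U_0 + U_1 \geq 0$. Suppose toward a contradiction that in some equilibrium $z = z_1$; the goal is to show this forces $U_0 + U_1 < 0$.

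The first step is a per-round welfare identity. Let $\tau^j = \min(m_0^j, m_1^j)$ denote the number of SNFT units transacted in round $j$. Checking each of the four branches of~(\ref{2pru0}) and~(\ref{2pru1}) yields
\[
U_0^j + U_1^j \;=\; \Delta\cdot(m_0^{j+1} - m_0^j) \;-\; \tfrac{1}{2}\,\tau^j\,\mathbf{1}\{\text{round } j \text{ ends in a failed repurchase}\}.
\]
The factor $(m_0^{j+1} - m_0^j)$ equals $+\tau^j$ when $N_0$ acquires shares in round $j$ and $-\tau^j$ when $N_0$ loses shares, which explains the $\Delta\cdot(m_0^{j+1}-m_0^j)$ contribution. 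The $-\tau^j/2$ correction reflects the one-unit discount in Mechanism~\ref{mechanism1} whenever the attempted repurchase fails; in the unique terminal round $t$ the repurchase succeeds, so no such correction is incurred.

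Telescoping this identity over $j = 1, \ldots, t$ gives
\[
U_0 + U_1 \;=\; \Delta\cdot(m_0^{t+1} - m_0^1) \;-\; \tfrac{1}{2}\sum_{j=1}^{t-1}\tau^j.
\]
If $z = z_1$ then $m_0^{t+1} = 0$, so the right-hand side equals $-\Delta\,m_0^1 - \tfrac{1}{2}\sum_{j<t}\tau^j$, which is strictly negative since $\Delta \geq 1$ and $m_0^1 \geq 1$. This contradicts Lemma~\ref{lemma:nonneg}, so the game cannot end at $z_1$ and must end at $z_0$. The case $v_1 > v_0$ is handled by the symmetric argument, exchanging the roles of the two players.

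The main obstacle I expect is the bookkeeping in the per-round identity --- verifying the sign of $m_0^{j+1} - m_0^j$ and the presence or absence of the $-\tau^j/2$ discount across all four combinations of (leader's role) $\times$ (repurchase success/failure). Once that identity is in place, the telescoping and the sign comparison are one-liners.
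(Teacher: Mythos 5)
Your proof is correct and follows essentially the same route as the paper's: both telescope the per-round welfare $U_0^j+U_1^j$ against $(m_0^{j+1}-m_0^j)(v_0-v_1)$ and invoke the non-negativity of equilibrium welfare (Lemma~\ref{lemma:nonneg}, with Lemma~\ref{lemma:finite} guaranteeing a terminal state) to rule out the wrong terminal state. The only cosmetic difference is that you track the $-\tau^j/2$ discount terms to get an exact identity where the paper settles for the one-sided inequality, which does not change the argument.
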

\begin{proof}
By (\ref{2pru0}) and (\ref{2pru1}), we have
\begin{equation*}
   U_0(m_0^j,m_1^j,p_0^j,p_1^j) + U_1(m_0^j,m_1^j,p_0^j,p_1^j) \leq (m_0^{j+1} - m_0^j) (v_0 -v_1). 
\end{equation*}
\begin{eqnarray*}
    U_0 + U_1 &\leq&\sum_{j \in \{1,2,\cdots,t\}} U_0(m_0^j,m_1^j,p_0^j,p_1^j) +  U_1(m_0^j,m_1^j,p_0^j,p_1^j)\\
    &\leq &  \sum_{j \in \{1,2,\cdots,t\}} (m_0^{j+1} - m_0^j)(v_0 -v_1)= (m_0^{t+1} - m_0^1)(v_0 -v_1).
\end{eqnarray*}
If $v_0 > v_1$, then it must be $m_0^{t+1} > m_0^1$. Otherwise, $U_0+U_1<0$, showing no Stackelberg equilibrium exists.
This is a contradiction. 
Because the  repeated  game  stops  in  a  finite  number  of  steps, $m_0^{t+1} \in \{0,M\}$. Combing the condition $m_0^{t+1}>m_0^1>0$, we have $m_0^{t+1}=M$. Therefore, at last $z = z_0$.
Similarly, it is easy to deduce $z = z_1$ if $v_1 > v_0$. \qed
\end{proof}

Based on Theorem \ref{theorem:maxwellfare}, we go to prove the existence of the Stackelberg equilibrium by proposing an equilibrium strategy in the following theorem.

\begin{theorem}\label{theorem:2PRSE}
If $v_i  > v_{1-i}$, $i=0,~1$, the following strategy is a Stackelberg equilibrium:
\begin{equation}\label{SEstrategy}
 p_{1-i}^j = v_{1-i};~~   p_i^j=
    \begin{cases}
    v_{1-i} + 1& if \;l^j = i;\\
    p^j_{1-i} & if \;l^j = 1-i, ~p^j_{1-i} \leq v_{1-i};\\
    p^j_{1-i} - 1& if \;l^j = 1-i, ~p^j_{1-i} > v_{1-i}.
    \end{cases}
\end{equation}
\end{theorem}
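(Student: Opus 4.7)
The plan is to apply the one-shot deviation principle to the proposed stationary strategy, verifying that at any reachable state $(m_0^j, m_1^j)$ neither player can profitably deviate in round $j$, assuming strategy-compliant continuation. First I would describe the induced trajectory. Whenever $l^j = i$ (so $m_{1-i}^j > m_i^j$), the bids $(p_i^j, p_{1-i}^j) = (v_{1-i}+1, v_{1-i})$ satisfy $p_i^j \geq p_{1-i}^j + 1$, so the minority leader $N_i$ wins and doubles its holdings. After at most $\lceil \log_2(M/m_i^1) \rceil$ such rounds $N_i$ becomes the majority holder; in the subsequent round $l^j = 1-i$ and both sides bid $v_{1-i}$, so the majority follower $N_i$ wins and collects all remaining shares, terminating the game at $z = z_i$, consistent with Theorem \ref{theorem:maxwellfare}. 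Per-round utilities are $U_{1-i}^{(j)} = 0$ and $U_i^{(j)} \geq 0$ throughout.

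In the case $l^j = i$, the best-response check splits as follows. Given $p_i^j = v_{1-i}+1$, a follower deviation $q \geq v_{1-i}+1$ terminates the game with $U_{1-i} = m_i^j(v_{1-i} - (v_{1-i}+1+q)/2) \leq -m_i^j$; a deviation $q \leq v_{1-i}-1$ leaves the leader winning but gives $U_{1-i} = m_i^j(q - v_{1-i})/2 < 0$; the proposed $q = v_{1-i}$ yields $U_{1-i} = 0$, matching its zero continuation value. For the leader, any $p \leq v_{1-i}$ elicits a follower match and terminates with $U_i = m_i^j(p - v_i) < 0$; any $p \geq v_{1-i}+2$ elicits the follower's $p-1$ response and yields round utility $m_i^j(v_i - p + 1/2)$, strictly less than at $p = v_{1-i}+1$, with identical state transition. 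In the case $l^j = 1-i$, the proposed bids $(v_{1-i}, v_{1-i})$ terminate with $U_i = m_{1-i}^j(v_i - v_{1-i}) > 0$ and $U_{1-i} = 0$. The follower $N_i$ deviating to $q > v_{1-i}$ still terminates but strictly shrinks $U_i$; the leader $N_{1-i}$ bidding $q < v_{1-i}$ elicits the follower match and ends with $U_{1-i} = m_{1-i}^j(q - v_{1-i}) < 0$, while $q > v_{1-i}$ triggers the punishment branch $p_i = q-1$ yielding $U_{1-i} = m_{1-i}^j(v_{1-i} - q + 1/2) \leq -m_{1-i}^j/2$, and the subsequent continuation still gives $N_{1-i}$ zero future utility.

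The most delicate case, and the main obstacle I expect, is the follower $N_i$ deviating to $q < v_{1-i}$ in the case $l^j = 1-i$: here the minority leader $N_{1-i}$ wins the current round, $N_i$ absorbs a negative round utility, and the game proceeds to a new subgame whose continuation, by Theorem \ref{theorem:maxwellfare}, still terminates at $z_i$. To bound $N_i$'s post-deviation total utility I would combine two facts. First, Lemma \ref{lemma:tru} guarantees $U_{1-i} \geq 0$ in the continuation, because $N_{1-i}$'s strategy always bids $v_{1-i}$ truthfully. Second, the identity underlying (\ref{sum}) charges a $-m^j/2$ penalty to $U_0 + U_1$ in every minority-wins round, so the extra minority-wins rounds induced by the deviation strictly decrease the sum $U_0 + U_1$ compared with the proposed play, whose sum is $(M - m_i^1)(v_i - v_{1-i})$ minus the unavoidable doubling discounts. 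Together these give $U_i \leq U_0 + U_1 < $ (proposed $U_i$), completing the one-shot deviation check and hence the verification that the displayed profile is a Stackelberg equilibrium.
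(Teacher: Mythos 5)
Your proposal is correct in substance and follows essentially the same route as the paper's Appendix B: you show $N_{1-i}$ earns exactly zero on path and strictly negative from any deviation, and you handle $N_i$'s deviations (in particular the delicate under-bid as majority follower) by combining the per-round welfare bound $U_i + U_{1-i} \leq (m_i^{t+1}-m_i^d)(v_i-v_{1-i})$, the $-\tfrac12 m^j$ penalty charged in every minority-wins round, and the fact that the truthfully-bidding $N_{1-i}$ retains non-negative (indeed strictly positive post-deviation) continuation utility — which is precisely the paper's argument. One slip to fix: in the case $l^j=i$ you evaluate the leader $N_i$'s deviations as if the follower $N_{1-i}$ played the reactive ``match or undercut by one'' rule, but the profile (\ref{SEstrategy}) assigns that reactive rule to $N_i$ only; $N_{1-i}$ bids the constant $v_{1-i}$ in every round. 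The deviation payoffs should therefore be $m_i^j\bigl((p+v_{1-i})/2 - v_i\bigr)$ for $p\leq v_{1-i}$ and $m_i^j\bigl(v_i-(p+v_{1-i})/2\bigr)$ for $p\geq v_{1-i}+1$ rather than $m_i^j(p-v_i)$ and $m_i^j(v_i-p+\tfrac12)$; the signs and monotonicity in $p$ are unchanged, so the conclusion that $p=v_{1-i}+1$ is optimal survives, but the formulas as written do not describe play against the stated profile.
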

The proof of Theorem~\ref{theorem:2PRSE} is provided in Appendix B.

\section{Multi-Player Repurchase Stackelberg Game}\label{MPGame}

This section goes to extend the discussion for the multi-player scenario, in which $N_0$ has more than half of $SNFT(id)$, and $\{N_1,\cdots,N_k\}$ are repurchased participants. $N_0$ triggers the repurchase process, and asks all other repurchased participants to report their bids $p_i$ at first, and $N_0$ decides his bid $p_0$ later. We also model the repurchase process of the multi-player scenario as a two-stage Stackelberg game, where $\{N_1,\cdots,N_k\}$ are the leaders to determine their bids in Stage I, and $N_0$ acts as the followers to decide his bid $p_0$ in Stage II. Different from the two-player scenario, $N_0$ shall trade with each $N_i$, $i=1,\cdots,k$. Then each $N_i$, $i=1,\cdots,k$, has his utility $U_i(p_0,p_i)$ as (\ref{uN0}) and (\ref{uNi}). But the utility of $N_0$ is the total utility from the trading with all $N_i$. That is
\begin{eqnarray*}
  U_0(p_0,p_1,\cdots,p_k)=\sum_{i=1}^kU_0^i(p_0,p_i),
\end{eqnarray*}
where $U_0^i(p_0,p_1)$ is defined as (\ref{uN0}) and (\ref{uNi}).

\subsection{Analysis of Stackelberg Equilibrium}

In the Stackelberg repurchase game for multi-player scenario, $N_0$ shall trade with each $N_i$, $i=1,\cdots,k$. Inspired by the Stackelberg equilibrium in the two-player Stackelberg game, we first discuss the best response of $N_0$, if each $N_i$ reports his bid as
\begin{equation}\label{MPU0}
    p_i^{*}=
    \begin{cases}
    v_0  & if \;v_i \leq v_0; \\
    v_0 + 1  & if \;v_i \geq v_0+1.
    \end{cases}
\end{equation}
Then we study the collusion from a group of repurchased players. Our task is to prove that once a group of repurchased participants deviate from the bidding strategy (\ref{MPU0}), then their total utility must be decreased. This guarantees that each participant would like to follow the bidding strategy (\ref{MPU0}).

\begin{lemma}\label{lemma:MPSE}
In the Stackelberg repurchase game for the multi-player scenario, if all leaders set their bids $\{p_i^*\}$ as (\ref{MPU0}) in Stage I, then the best response of the follower $N_0$ in Stage II is $p_0^*(p_1^{*},\cdots,p_n^{*}) = v_0$.
\end{lemma}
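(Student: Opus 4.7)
The plan is to substitute the leaders' bids $p_i^*$ from (\ref{MPU0}) into the follower's utility $U_0(p_0,p_1^*,\ldots,p_k^*)=\sum_{i=1}^k U_0^i(p_0,p_i^*)$ and show that this function of the single variable $p_0$ attains its maximum value at $p_0=v_0$. Because each $p_i^*$ takes only one of two values, the sum decomposes naturally along the partition $S_0=\{i:v_i\leq v_0\}$ (where $p_i^*=v_0$) and $S_1=\{i:v_i\geq v_0+1\}$ (where $p_i^*=v_0+1$), so the argument reduces to a per-index analysis.

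First I would evaluate $U_0$ at the candidate $p_0=v_0$. For $i\in S_0$ the condition $p_0\geq p_i^*$ holds with equality, so by (\ref{uN0}) the term equals $m_i\bigl(v_0-\tfrac{v_0+v_0}{2}\bigr)=0$. For $i\in S_1$ we have $p_0=v_0\leq p_i^*-1$, so by (\ref{uNi}) the term equals $m_i\bigl(\tfrac{v_0+(v_0+1)-1}{2}-v_0\bigr)=0$. Hence $U_0(v_0,p_1^*,\ldots,p_k^*)=0$.

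Next I would verify that no $p_0\neq v_0$ can do better, by showing each summand $U_0^i(p_0,p_i^*)\leq 0$ for every $p_0$. For $i\in S_0$, (\ref{uN0}) gives $U_0^i=\tfrac{m_i}{2}(v_0-p_0)\leq 0$ when $p_0\geq v_0$, and (\ref{uNi}) gives $U_0^i=\tfrac{m_i}{2}(p_0-v_0-1)\leq -\tfrac{m_i}{2}<0$ when $p_0\leq v_0-1$. For $i\in S_1$, (\ref{uN0}) applied at $p_0\geq v_0+1$ gives $U_0^i=\tfrac{m_i}{2}(v_0-p_0-1)\leq -m_i<0$, and (\ref{uNi}) applied at $p_0\leq v_0$ gives $U_0^i=\tfrac{m_i}{2}(p_0-v_0)\leq 0$. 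In every case the summand is nonpositive, so $U_0(p_0,p_1^*,\ldots,p_k^*)\leq 0$ for all $p_0$, with equality at $p_0=v_0$.

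The bookkeeping is routine once the partition $S_0\cup S_1$ is fixed, so no step requires a clever maneuver; the main obstacle is simply keeping the $+\tfrac12/-\tfrac12$ asymmetry in Mechanism~\ref{mechanism1} straight across the four sub-cases defined by the pair (sign of $v_i-v_0$, sign of $p_0-v_0$). Once these four cases are checked, the conclusion $p_0^*(p_1^*,\ldots,p_k^*)=v_0$ follows immediately.
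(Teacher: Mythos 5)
Your proof is correct, and it follows essentially the same route as the paper: decompose $U_0(p_0,\mathbf{p}^*)=\sum_i U_0^i(p_0,p_i^*)$ and show that $p_0=v_0$ maximizes every summand simultaneously. The only cosmetic difference is that the paper obtains the per-term optimality by citing Lemma~\ref{lemma1}, whereas you re-derive it directly by checking that each $U_0^i(p_0,p_i^*)\leq 0$ with equality exactly at $p_0=v_0$; your computations in all four sub-cases are accurate.
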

\begin{proof}
For each trading between $N_0$ and $N_i$, $i=1,\cdots,k$, Lemma \ref{lemma1} ensures that $v_0=argmax_{p_0}U_0^i(p_0,p_i^*)$. Since each $U_0^i(p_0,p_i^*)\geq 0$, we have
\begin{eqnarray*}
  p_0^*(p_1^*,\cdots,p_k^*)=\argmax_{p_0}U_0(p_0,p_1^*,\cdots,p_k^*)=\argmax_{p_0}\sum_{i=1}^kU_0^i(p_0,p_i^*)=v_0.
\end{eqnarray*}
This lemma holds.\qed
\end{proof}

To study the collusion of repurchased participants, we partition the set of $\{N_1,\cdots,N_k\}$ into two disjoint subsets $A$ and $B$, such that each $N_i\in A$ follows the bidding strategy (\ref{MPU0}), while each $N_i\in B$ does not. Thus given all bids provided by players, the bid profile $\mathrm{\mathbf{p}}=(p_0, \{p_i^*\}_{N_i\in A}, \{p_i\}_{N_i\in B})$ can be equivalently expressed as $\mathrm{\mathbf{p}}=(p_0,\mathrm{\mathbf{p}}^*_A,\mathrm{\mathbf{p}}_B)$.
Here we are interested in the total utility of all players in $B$, and thus define
\begin{eqnarray*}
  U_B(p_0,\mathrm{\mathbf{p}}^*_A,\mathrm{\mathbf{p}}_B) = \sum_{N_i\in B}U_i(p_0, p_i).
\end{eqnarray*}
Following Lemma shows that once a group of participants deviate from the bidding strategy (\ref{MPU0}), then their total utility decreases.
\begin{lemma}\label{lemma:nonco}
Let $A=\{N_i|p_i=p_i^*\}$ and $B=\{N_i|p_i\neq p_i^*\}$. Then
\begin{eqnarray*}
  U_B(p_0^*(\mathrm{\mathbf{p}}^*_A,\mathrm{\mathbf{p}}_B), \mathrm{\mathbf{p}}^*_A,\mathrm{\mathbf{p}}_B) < U_B(v_0, p_1^*,p_2^*,\cdots,p_k^*).
\end{eqnarray*}
\end{lemma}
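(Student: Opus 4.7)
The plan is a welfare-accounting argument. First, define the per-trade welfare $W_i(p_0, p_i) := U_0^i(p_0, p_i) + U_i(p_0, p_i)$; directly from (\ref{uN0})--(\ref{uNi}) this equals $m_i(v_0 - v_i)$ when $p_0 \geq p_i$ and $m_i(v_i - v_0) - m_i/2$ when $p_0 \leq p_i - 1$. A two-case check on $v_i \leq v_0$ versus $v_i \geq v_0 + 1$ then yields the pointwise bound $W_i(p_0, p_i) \leq W_i^* := W_i(v_0, p_i^*)$, with equality precisely when the allocation is welfare-optimal ($N_0$ wins iff $v_i \leq v_0$). The definition of $p_i^*$ in (\ref{MPU0}) simultaneously gives the baseline identities $U_0^i(v_0, p_i^*) = 0$, $U_i(v_0, p_i^*) = W_i^*$, and $U_0(v_0, p_1^*, \ldots, p_k^*) = 0$.

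Next, I would combine the welfare bound with the follower's best-response bound. Introduce the auxiliary function $f(p_0) := \sum_{i \in A} U_0^i(p_0, p_i^*) + \sum_{i \in B} W_i(p_0, p_i)$, so that $U_B(p_0, \mathrm{\mathbf{p}}^*_A, \mathrm{\mathbf{p}}_B) = f(p_0) - U_0(p_0, \mathrm{\mathbf{p}}^*_A, \mathrm{\mathbf{p}}_B)$ for every $p_0$. A short case check using $p_i^* \in \{v_0, v_0+1\}$ shows that $U_0^i(p_0, p_i^*) \leq 0$ for every $i \in A$, with equality iff $p_0 = v_0$; consequently $f(p_0) \leq \sum_{i \in B} W_i^* = U_B(v_0, p_1^*, \ldots, p_k^*)$. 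Because $N_0$ chooses $p_0^*$ to maximize his own utility, and $U_0^i(v_0, p_i) \geq 0$ for every $p_i$, we also have $U_0(p_0^*, \mathrm{\mathbf{p}}^*_A, \mathrm{\mathbf{p}}_B) \geq U_0(v_0, \mathrm{\mathbf{p}}^*_A, \mathrm{\mathbf{p}}_B) = \sum_{i \in B} U_0^i(v_0, p_i) \geq 0$. Together these yield the weak version
\begin{equation*}
U_B(p_0^*, \mathrm{\mathbf{p}}^*_A, \mathrm{\mathbf{p}}_B) = f(p_0^*) - U_0(p_0^*, \mathrm{\mathbf{p}}^*_A, \mathrm{\mathbf{p}}_B) \leq U_B(v_0, p_1^*, \ldots, p_k^*).
\end{equation*}

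The main obstacle is promoting this to a strict inequality, and I would do so by contradiction. Equality throughout the chain simultaneously forces (a) $U_0^i(p_0^*, p_i^*) = 0$ for every $i \in A$, (b) $W_i(p_0^*, p_i) = W_i^*$ for every $i \in B$, and (c) $U_0(p_0^*, \mathrm{\mathbf{p}}^*_A, \mathrm{\mathbf{p}}_B) = 0$. By the equality conditions identified in the first two paragraphs, (a) forces $p_0^* = v_0$ whenever $A \neq \emptyset$, while (c) together with $U_0^i(v_0, p_i) \geq 0$ forces $U_0^i(v_0, p_i) = 0$ for every $i \in B$, which pins $p_i \in \{v_0, v_0+1\}$ and, since $p_i \neq p_i^*$, leaves only the two sub-cases ($v_i \leq v_0$ with $p_i = v_0+1$) or ($v_i \geq v_0 + 1$ with $p_i = v_0$). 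Condition (b) then forces $p_0^* \geq v_0 + 1$ in the first sub-case and $p_0^* \leq v_0 - 1$ in the second, so the two sub-cases cannot coexist in $B$, and either constraint already contradicts $p_0^* = v_0$ when $A \neq \emptyset$; when $A = \emptyset$, plugging the surviving one-sided constraint into the explicit formula for $U_0(p_0^*, \mathrm{\mathbf{p}}_B)$ yields a strictly negative value, again contradicting (c). Hence whenever $B$ is non-empty at least one of (a)--(c) must be strict, and the stated strict inequality follows.
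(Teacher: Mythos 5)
Your proof is correct, and its skeleton is the same welfare-accounting argument the paper uses: the identity $U_0^i+U_i=W_i$ with the pointwise bound $W_i(p_0,p_i)\le W_i(v_0,p_i^*)$ (the paper's (\ref{U0PlusUI})--(\ref{U0PlusUIMAX}) and $C_B$), the fact that $v_0$ uniquely maximizes each $U_0^i(\cdot,p_i^*)$ (the paper's (\ref{CE5})), and the follower's best-response inequality $U_0(p_0^*,\cdot)\ge U_0(v_0,\cdot)$ (the paper's (\ref{CE4})). The genuine difference is where strictness comes from. The paper ends with $U_B(BR_2,\cdot)\le U_B(v_0,\mathbf{p}^*)+\bigl(\sum_{N_i\in B}U_0^i(v_0,p_i^*)-\sum_{N_i\in B}U_0^i(v_0,p_i)\bigr)$ and asserts the bracket is negative ``from (\ref{CE1})''; but (\ref{CE1}) concerns $U_i$, not $U_0^i$, and since $\sum_{N_i\in B}U_0^i(v_0,p_i^*)=0$ while $U_0^i(v_0,p_i)=0$ exactly when $p_i\in\{v_0,v_0+1\}$, a deviator with $v_i\le v_0$ who bids $p_i=v_0+1$ (or $v_i\ge v_0+1$ who bids $v_0$) makes that bracket vanish, so the paper's final step as justified only yields ``$\le$'' in that regime. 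For such boundary deviations the strict gap actually sits in the welfare inequality (the allocation becomes inefficient), which is precisely what your equality-case analysis of conditions (a)--(c) isolates: you prove the weak inequality first and then show equality forces incompatible constraints on $p_0^*$. So your route is slightly longer but more watertight on exactly the point where the paper's bookkeeping is loose; the one hypothesis to keep explicit, which you do note, is $B\neq\emptyset$.
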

The proof of Theorem~\ref{lemma:nonco} is provided in Appendix C.

\begin{theorem}\label{theorem:MSE}
In the multi-player Stackelberg repurchase game, the bid profile $(v_0,p_1^*\cdots,p_k^*)$ is a Stackelberg equilibrium, where $p_i^*$ is set as (\ref{MPU0}).
\end{theorem}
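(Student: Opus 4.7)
The plan is to verify that the proposed profile $(v_0, p_1^*, \ldots, p_k^*)$ satisfies the two conditions defining a Stackelberg equilibrium in this multi-leader, single-follower setting: first, that the follower $N_0$ best responds to the leaders' bids; and second, that for each leader $N_i$, the bid $p_i^*$ is a best response given that the other leaders hold their bids fixed at $p_j^*$ and that $N_0$ best responds to the resulting bid profile.

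The follower condition is handled directly by Lemma~\ref{lemma:MPSE}, which states that when every leader bids according to (\ref{MPU0}), the best response of the follower is $p_0^*(p_1^*, \ldots, p_k^*) = v_0$. Nothing further needs to be done for this step.

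For the leader condition, I would fix an arbitrary $i \in \{1, \ldots, k\}$ and a hypothetical deviation $p_i \neq p_i^*$ by $N_i$, with every other leader $N_j$ continuing to bid $p_j^*$. Apply Lemma~\ref{lemma:nonco} with $A = \{N_1, \ldots, N_k\} \setminus \{N_i\}$ and $B = \{N_i\}$. Because $B$ is a singleton, $U_B$ coincides with $U_i$, and since $U_i$ depends only on $p_0$ and $p_i$, the right-hand side $U_B(v_0, p_1^*, \ldots, p_k^*)$ reduces to $U_i(v_0, p_i^*)$. The lemma then yields
\[
U_i\bigl(p_0^*(\mathrm{\mathbf{p}}^*_A, p_i), p_i\bigr) < U_i(v_0, p_i^*),
\]
so the deviation strictly decreases $N_i$'s payoff. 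Hence $p_i^*$ is indeed a best response for $N_i$ against $N_0$'s optimal reaction function, and this holds for every leader $i$.

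Combining the two steps establishes that $(v_0, p_1^*, \ldots, p_k^*)$ is a Stackelberg equilibrium. There is no genuine obstacle here: Lemma~\ref{lemma:nonco} already absorbs the real difficulty by handling arbitrary coalitional deviations, and the theorem follows as a direct corollary of its singleton case. The only conceptual subtlety worth flagging is the two-level response structure — when testing a deviation $p_i$, one must compare $U_i(v_0, p_i^*)$ with $U_i$ evaluated at the \emph{adjusted} follower reply $p_0^*(\mathrm{\mathbf{p}}^*_A, p_i)$, not at $v_0$ — which is precisely what the left-hand side of Lemma~\ref{lemma:nonco} encodes.
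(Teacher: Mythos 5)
Your proof is correct and follows essentially the same route as the paper's: Lemma~\ref{lemma:MPSE} handles the follower's best response, and Lemma~\ref{lemma:nonco} applied to the singleton deviating set $B=\{N_i\}$ shows no leader gains by a unilateral deviation. Your explicit remark about evaluating the deviation at the adjusted follower reply $p_0^*(\mathrm{\mathbf{p}}^*_A, p_i)$ is exactly the point the paper's inequality $U_i(p_0^*(\mathrm{\mathbf{p}}^*_{-i},p_i),\mathrm{\mathbf{p}}^*_{-i},p_i)<U_i(v_0,\mathrm{\mathbf{p}}^*_{i})$ encodes.
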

\begin{proof}
To simplify our discussion, we define the price profile $\mathrm{\mathbf{p}}^*=(p_1^*,\cdots,p_k^*)$, and $\mathrm{\mathbf{p}}_{-i}^*$ denotes the profile without the price of $N_i$. So $\mathrm{\mathbf{p}}^*=(\mathrm{\mathbf{p}}^*_{-i},p^*_i)$. From Lemma~\ref{lemma:MPSE}, we have the best response of $N_0$ in Stage II is $p_0^*(\mathrm{\mathbf{p}}^*)=v_0$. However, Lemma \ref{lemma:nonco} indicates that no one would like to deviate from the pricing strategy (\ref{MPU0}), as
$
  U_i(p_0^*(\mathrm{\mathbf{p}}^*_{-i},p_i),\mathrm{\mathbf{p}}^*_{-i},p_i)<U_i(v_0,\mathrm{\mathbf{p}}^*_{i}).
$
Thus given the price profile $\mathrm{\mathbf{p}}^*$, nobody would like to change its strategy $p_i^*$ unilaterally. Therefore, $(v_0,p_1^*\cdots,p_k^*)$  is a Stackelberg equilibrium. \qed
\end{proof}

From the perspective of cooperation, we can observe that no group of repurchased participants would like to collude to deviate from the bidding strategy (\ref{MPU0}) by Lemma \ref{lemma:nonco}. Thus we have the following corollary.
\begin{corollary}
Given the Stackelberg equilibrium of  $(v_0,p_1^*\cdots,p_k^*)$, no group of repurchased participants would like to deviate this equilibrium.
\end{corollary}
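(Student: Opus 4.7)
The plan is to observe that this corollary is an almost immediate consequence of Lemma~\ref{lemma:nonco}, which was already the technical workhorse for Theorem~\ref{theorem:MSE}. Theorem~\ref{theorem:MSE} handled only unilateral deviations, but Lemma~\ref{lemma:nonco} is actually stronger: it speaks about an arbitrary subset $B$ of repurchased participants simultaneously switching away from $p_i^*$, and it bounds their \emph{joint} utility, not just individual utilities. So the coalition-proofness statement of the corollary is essentially a reinterpretation of that lemma.

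Concretely, I would fix any nonempty coalition $B \subseteq \{N_1,\dots,N_k\}$ of repurchased participants that contemplates a joint deviation to some bid profile $\mathrm{\mathbf{p}}_B \neq \mathrm{\mathbf{p}}^*_B$, while the remaining players in $A = \{N_1,\dots,N_k\}\setminus B$ keep their equilibrium bids $\mathrm{\mathbf{p}}^*_A$. Since $N_0$ plays his best response in Stage~II, after the deviation he responds with $p_0^*(\mathrm{\mathbf{p}}^*_A,\mathrm{\mathbf{p}}_B)$, while along the equilibrium path he plays $v_0 = p_0^*(\mathrm{\mathbf{p}}^*)$ by Lemma~\ref{lemma:MPSE}. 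Applying Lemma~\ref{lemma:nonco} directly to this partition $(A,B)$ yields
\begin{eqnarray*}
  U_B\bigl(p_0^*(\mathrm{\mathbf{p}}^*_A,\mathrm{\mathbf{p}}_B),\mathrm{\mathbf{p}}^*_A,\mathrm{\mathbf{p}}_B\bigr) < U_B(v_0,p_1^*,\dots,p_k^*),
\end{eqnarray*}
so the total utility of the coalition strictly decreases. Hence at least one member of $B$ is strictly worse off, which rules out any profitable joint deviation (under any reasonable coalition improvement notion, e.g.\ Pareto improvement or total-transfer improvement).

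The only subtle point is matching the statement of Lemma~\ref{lemma:nonco} to what a ``coalition deviation'' should mean here. Lemma~\ref{lemma:nonco} is phrased in terms of the \emph{sum} of coalition utilities, which is the strongest natural measure since it allows internal side-payments. I would therefore explicitly interpret ``would like to deviate'' in the sense that the coalition cannot collectively improve, and note that this already implies individual resistance as well. No new calculation is required, and there is no real obstacle: the heavy lifting was done in Lemma~\ref{lemma:nonco}, and the corollary is essentially a packaging statement that promotes the unilateral Stackelberg equilibrium of Theorem~\ref{theorem:MSE} to a coalition-proof one.
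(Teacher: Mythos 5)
Your proposal is correct and matches the paper's reasoning exactly: the paper likewise derives this corollary directly from Lemma~\ref{lemma:nonco}, observing that any coalition $B$ deviating from the bidding strategy (\ref{MPU0}) strictly decreases its total utility, so no group of repurchased participants would collude to deviate. No additional argument is given or needed.
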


In the case of incomplete information, the analysis of the Bayesian Stackelberg equilibrium becomes extremely complicated. As discussed in Section~\ref{2PBSE}, in the case of the two-player Stackelberg game, the leader only needs to optimize the utility based on incomplete information. However, when there are multiple leaders, the strategies of leaders should reach a Bayesian Nash equilibrium, which is much more difficult to calculate. So we regard it as our future work to analyze the Bayesian Stackelberg equilibrium of the multi-player repurchase Stackelberg game.

\section{Discussion}\label{discussion}

\subsection{A Blockchain Solution to Budget Constraints}\label{Budget}
In the previous settings, we do not consider the budget constraints. However, this is a common problem for many newly proposed mechanisms. Therefore, we propose a solution scheme by blockchain for the setting with budget constraints.

Suppose $N_0$ owes more than half of $SNFT(id)$ and triggers the repurchase process. Our mechanism consists of two stages. All participants except for $N_0$ report their bids in Stage I, and $N_0$ gives his bid $p_0$ in Stage II. We assume $N_0$'s budget is larger than $(M- m_0)p_0$, so that he can repurchase all other shares at his bid $p_0$. 
For $N_i$, $i \neq 0$, if $p_i > p_0$, $N_i$ should pay $\frac{p_0+p_i}{2}m_i$. However, the payment of $\frac{p_0+p_i}{2}m_i$ may exceed his budget, such that $N_i$ has not enough money to buy $m_i$ units of $SNFT(id)$. Under this situation, we provide a blockchain solution for $N_i$ to solve the problem of budget shortage. That is, we allow $N_i$ to sell his option of buying $m_i$ units of $SNFT(id)$ to anyone in the blockchain system. If nobody would like to buy $N_i$'s repurchase option, then $N_0$ can repurchase $N_i$'s shares at a lower price.
Therefore, after reporting bids, additional four steps are needed to finish the payment procedure. 

\begin{itemize}
\item Step 1. $N_0$ pays $\sum_{i\in\{1,2,\cdots,k\},p_i \leq p_0}\frac{p_0+p_i}{2}m_i$. After the payment, $N_0$ gets $\sum_{i\in\{1,2,\cdots,k\},p_i \leq p_0}m_i$ pieces of $SNFT(id)$.
    For each $N_i$ with $p_i\leq p_0$, $i\in\{1,2,\cdots,k\}$, he gets the revenue of $\frac{p_0+p_i}{2}\cdot m_i$ and loses $m_i$ units of $SNFT(id)$.
    
\item Step 2. For all $i \in \{1,2,\cdots,k\}$ that $p_i > p_0$, $N_i$ shall pay $\frac{p_0 + p_i}{2}\cdot m_i$ to buy $m_i$ units of $SNFT(id)$ from $N_0$. Once $m_i$ units of $SNFT(id)$ of $N_0$ is sold to $N_i$, $N_0$ obtains a discounted revenue $\frac{p_0 + p_i-1}{2}\cdot m_i$.\\
 If $N_i$ would not like to repurchase $SNFT(id)$, then he can
sell his repurchase option to others at a price of $\widetilde{p}_i\in \mathbb{Z}$. The price of repurchase option $\widetilde{p}_i$ could be negative, meaning that $N_i$ shall pay $\widetilde{p}_i$ to another who accepts his chance.
If $N_i$ does nothing, we regard that $N_i$ proposes $\widetilde{p}_i= 0$. 
    
\item Step 3. If a participant in the blockchain system accepts the price of $\widetilde{p}_i$, then he would propose a transaction to buy $m_i$ units of $SNFT(id)$ from $N_0$. The total cost of this participant is $\widetilde{p}_i+\frac{p_0+p_i}{2}\cdot m_i$, in which $\widetilde{p}_i$ is paid to $N_i$ and $N_0$ obtains a discounted revenue of $\frac{p_0+p_i-1}{2}\cdot m_i$. And $m_i$ units of $SNFT(id)$ are transferred from $N_0$ to the participant who buys the repurchase option.\\
At the end of this step, let $C$ be the participant set, in which each participant's repurchase option hasn't been sold yet. 

\item Step 4. For each participant $N_i\in C$, $N_0$ repurchases $m_i$ units of $SNFT(id)$ from $N_i$
at a lower price of $2p_0-p_i (<p_0)$. At the end of this step, $N_0$ obtains $m_i$ units of $SNFT(id)$, and $N_i$ obtains a revenue of $(2p_0-p_i)\cdot m_i$.   
   
\end{itemize}

\subsection{A Blockchain Solution to Lazy Bidders}

Under some circumstances, an $SNFT(id)$ holder might not bid in the repurchase process, who is named as a lazy bidder. This lazy behavior may block the repurchase process. To solve the problem caused by lazy bidders, we propose the following two schemes.

\begin{itemize}
    \item\textbf{Custody Bidding}. NFT's smart contract supports the feature for the $SNFT(id)$ holders to assign administrators to report a bid when the holder is idle or fails to make a bid. 

    \item \textbf{Value Predetermination}. Whenever a participant obtains any units of $SNFT(id)$, this participant is required to predetermine the value at which he is willing to bid, and this information is stored in the smart contract. At the beginning of the repurchase process, if a participant fails to make a bid within a certain amount of time, the smart contract automatically reports this participant's predetermined bid. This does not mean, however, that the participant has to bid at the predetermined price if he decides to make an active bid. 
\end{itemize}

\section{Conclusion}
In this paper, we propose a novel securitization and repurchase scheme for NFT to overcome the restrictions in existing NFT markets. We model the NFT repurchase process as a Stackelberg game and analyze the Stackelberg equilibria under several scenarios. To be specific, in the setting of the two-player one-round game, we prove that in a Stackelberg equilibrium, $N_0$, the participant who triggers the repurchase process, shall give the bid equally to his own value estimate. In the two-player repeated game, all securities shall be finally owned by the participant who has a higher value estimate. In the setting of multiple players, cooperation among participants cannot bring  higher utilities to them. What's more, each participant can get non-negative utility if he bids truthfully in our repurchase process.

How to securitize and repurchase NFT efficiently is a popular topic in the field of blockchain. Our work proposes a sound solution for this problem. In the future, we continue to refine our theoretical analysis. First, for the multi-player repurchase Stackelberg game, we will consider the case with incomplete information and explore the Bayesian Stackelberg equilibrium.
Second, a model of blockchain economics will be constructed to analyze the payment procedure in Section~\ref{Budget}. Furthermore, there exist some other interesting problems, including how to securitize and repurchase a common-valued NFT~\cite{kagel2009common}, how to host Complete NFTs or Securitized NFTs in decentralized custody protocols~\cite{chen2020decentralized}, whether ABSNFT can serve as a price Oracle, and so on.

\section*{Acknowledgment }

This research was partially supported by the National Major Science and Technology Projects of China-“New Generation Artificial Intelligence” (No. 2018AAA\\0100901), the National Natural Science Foundation of China (No. 11871366), and Qing Lan Project of Jiangsu Province.

\bibliographystyle{splncs04}

\bibliography{references}

\appendix
\newpage
\section*{Appendix}

\subsection*{A. Proof of Theorem \ref{SEalsoNE}}
\begin{proof}
From Theorem~\ref{t1} we know that the best response of $N_0$ is always $p_0^*=v_0$. Next, we shall discuss the best response of $N_1$ under the condition that $N_0$'s bidding strategy is $p_0=v_0$. By (\ref{Utility_of_N1}), we have
{\small\begin{equation*}
    U_1(v_0,p_1) =
    \begin{cases}
    m_1 (\frac{v_0 + p_1}{2} - v_1)  & if \;p_1 \leq v_0; \\
    m_1 (v_1 - \frac{v_0 + p_1}{2}) & if \;p_1 \geq  v_0+1.
    \end{cases}
\end{equation*}}
So $U_1$ monotonically increases when $p_1 \leq v_0$ and monotonically decreases when $p_1 \geq v_0+1$, implying $p_1^*\in\{v_0,v_0+1\}$. Particularly, when $v_0\geq v_1$, we have
{\small\begin{eqnarray*}
  U_1(v_0,v_0)=m_1(v_0-v_1)\geq 0> m_1(v_1-v_0-\frac12)=U_1(v_0,v_0+1),
\end{eqnarray*}}
showing the best response of $N_1$ is $p_1^*=v_0$. On the other hand, when $v_0\leq v_1-1$,
{\small\begin{eqnarray*}
  U_1(v_0,v_0)=m_1(v_0-v_1)< 0< m_1(v_1-v_0-\frac12)=U_1(v_0,v_0+1),
\end{eqnarray*}}
showing the best response of $N_1$ is $p_1^*=v_0+1$. This result holds. \qed
\end{proof}

\subsection*{B. Proof of Theorem \ref{theorem:2PRSE}} 
\begin{proof}
Let us denote the above strategy (\ref{SEstrategy}) as $p_i^*(j)$ and $p_{1-i}^*(j)$. We have
\begin{equation}\label{2PRSEU1}
    U_{1-i}(s^j,p_i^*(j),p_{1-i}^j)=
    \begin{cases}
    (v_{1-i}-(p_{1-i}^j + v_{1-i} + 1)/2)m_i& if \;l^j = i, p_{1-i}^j \geq v_{1-i} + 1\\
    ((p_{1-i}^j + v_{1-i})/2 - v_{1-i})m_i& if \;l^j = i, p_{1-i}^j \leq v_{1-i}\\
    (p_{1-i}^j - v_{1-i})m_{1-i} & if \;l^j = 1-i, p_{1-i}^j \leq v_{1-i}\\
    (v_{1-i} - (2p_{1-i}^j - 1)/2)m_{1-i}& if \;l^j = 1-i, p_{1-i}^j > v_{1-i}
    \end{cases}
\end{equation}
Denote $BR^{1-i}(p_i^*(j))$ to be the best responses of $N_{1-i}$ with respect to strategy $p_i^*(j)$, and $BR^{i}(p_{1-i}^*(j))$ is similar. Then
$$BR^{1-i}(p_i^*(j)) = \argmax_{p_{1-i}^1,p_{1-i}^2, \cdots, p_{1-i}^t \in \{0,1,2,\cdots\}} U_{1-i}$$
Obviously, $U_{1-i}(s^j,p_i^*(j),p_{1-i}^*(j)) = 0$. If $p_{1-i}^j \neq p_{1-i}^*(j)$, from (\ref{2PRSEU1}) we have
$U_{1-i}(s^j,p_i^*(j),p_{1-i}^j) < 0 = u_{1-i}(s^j,p_i^*(j),p_{1-i}^*(j)).$
So
$$
   BR^{1-i}(p_i^*(j)) = \argmax_{p_{1-i}^1,p_{1-i}^2, \cdots, p_{1-i}^t \in \{0,1,2,\cdots\}} U_{1-i} = p_{1-i}^*(j).
$$

Now let us consider the case that $p_{1-i}^j = p_{1-i}^*(j)$, for any $j \in \{0,1,2,\cdots,t\}$.

Denote $U_i^* = \sum_{j =1}^{t} U_i(m_0^j,m_1^j,p_i^*(j),p_{1-i}^*(j))$, and we have $U_i^* > 0$.

(1) If $m_i^d < m_{1-i}^d$, then $l^d = i$.

If $p_i^d \leq p_{1-i}^*(d)$, then $z = z_{1-i}$. And we have $U_i + U_{1-i} <0$ when $z = z_{1-i}$. Combined with $U_{1-i} \geq 0$, we have $U_i < U_i^*$.

If $p_i^d > p_{1-i}^*(d)$, then $$\argmax_{p_i^d > p_{1-i}^*(d)}U_i = \argmax u_i(s^d,p_i^*(d),p_{1-i}^*(d)) = p_{1-i}^*(d) + 1 = p_i^*(d)$$

So $\argmax_{p_i^d \in \{0,1,2,\cdots\}}U_i  = p_i^*(d)$.

(2) When $m_i^d > m_{1-i}^d$, we have $l^d = 1 - i$.

For $ 1\leq d \leq t$, we define $U_i(d) = \sum_{j = \{d,d+1,\cdots,t\}}u_i(m_0^j,m_1^j,p_0^j,p_1^j)$. Similarly, we define $U_{1-i}(d) = \sum_{j = \{d,d+1,\cdots,t\}}u_i(m_0^j,m_1^j,p_0^j,p_1^j)$.

If $p_i^d \neq p_i^*(d)$ $u_{1-i}(s^d,p_i^d,p_{1-i}^*(d)) > 0$, so we have $U_{1-i}(d) > 0$ when $p_i^d \neq p_i^*(d)$.
Then
\begin{eqnarray*}
    U_i(d) + U_{1-i}(d) &=&\sum_{j \in \{d,d+1,\cdots,t\}} u_0(m_0^j,m_1^j,p_0^j,p_1^j) +  u_1(m_0^j,m_1^j,p_0^j,p_1^j)\\
    &\leq &  \sum_{j \in \{d,d+1,\cdots,t\}} (m_i^{j+1} - m_i^j)(v_i -v_{1-i})\\
    &=& (m_i^{t+1} - m_i^d)(v_i -v_{1-i})
\end{eqnarray*}

If $p_i^d = p_i^*(d)$, $U_i(d) = (m_i^{t+1} - m_i^d)(v_i -v_{1-i})$. If $p_i^d \neq p_i^*(d)$, $U_{1-i}(d) >0$, then $U_i(d) < (m_i^{t+1} - m_i^d)(v_i -v_{1-i})$. So
$$\argmax_{p_i^d\in\{0,1,2,\cdots\}}U_i = \argmax_{p_i^d\in\{0,1,2,\cdots\}, }U_i(d) = p_i^*(d).$$

Above all, we have
\begin{eqnarray}
    BR^i(p_{1-i}^*(j)) = \argmax_{p_i^j\in\{0,1,2,\cdots\}, j\in \{0,1,2,\cdots, t\}}U_i = p_i^*(j)
\end{eqnarray} \qed
\end{proof}

\subsection*{C. Proof of Lemma \ref{lemma:nonco}}

\begin{proof}
By Lemma \ref{lemma2}, if follower $N_0$ sets its price as $p_0=v_0$ in Stage II, then the optimal price is $p_i=p_i^*$ by leader $N_i$ is Stage I. It means that $0\leq U_i(v_0,p_i) < U_i(v_0,p_i^*)$, implying
\begin{equation}\label{CE1}
    \sum_{N_i\in B}U_i(v_0, p_i) < \sum_{N_i\in B}U_i(v_0, p_i^*).
\end{equation}
Let us simplify the best response $BR(p_1,\cdots,p_k)$ of $N_0$ as $BR$. From Lemma~\ref{lemma1} we have
\begin{equation}\label{CE5}
    \sum_{N_i\in A}U_0^i(BR_2, p_i^*) \leq \sum_{N_i\in A}U_0^i(v_0, p_i^*).
\end{equation}
From the utility function (\ref{uN0}) and (\ref{uNi}), we have
\begin{equation}\label{U0PlusUI}
U_0^i(p_0, p_i) + U_i(p_0, p_i) =
    \begin{cases}
        m_i(v_0 - v_i)  & if \;p_0 \geq p_i; \\
        m_i(v_i - v_0) - \frac{1}{2}  & if \;p_0 \leq p_i - 1.
    \end{cases}
\end{equation}
Then
\begin{equation}\label{U0PlusUIMAX}
\max_{p_0,p_i}(U_0^i(p_0, p_i) + U_i(p_0, p_i)) =
    \begin{cases}
        m_i(v_0 - v_i)  & if \;v_0 \geq v_i; \\
        m_i(v_i - v_0) - \frac{1}{2}  & if \;v_0 \leq v_i - 1.
    \end{cases}
\end{equation}
$$\sum_{N_i\in B}U_0^i(p_0, p_i) + \sum_{N_i\in B}U_i(p_0,p_i) \leq \sum_{N_i\in B,v_i \leq v_0}m_i(v_0-v_i) + \sum_{N_i\in B,v_i \geq v_0+1}(m_i(v_i-v_0) - \frac{1}{2}).$$
Denote $C_B =  \sum_{N_i\in B,v_i \leq v_0}m_i(v_0-v_i) + \sum_{N_i\in B,v_i \geq v_0+1}(m_i(v_i-v_0) - \frac{1}{2})$, then
\begin{equation}\label{CE2}
\sum_{N_i\in B}U_0^i(p_0, p_i) + U_B(p_0, \mathrm{\mathbf{p}}_A^*,\mathrm{\mathbf{p}}_B) \leq C_B,
\end{equation}
and
\begin{equation}\label{CE3}
\sum_{N_i\in B}U_0^i(v_0, p_i^*) + U_B(v_0, p_1^{*}, \cdots p_k^{*}) = C_B.
\end{equation}
Because $BR_2(\mathrm{\mathbf{p}}_A^*,\mathrm{\mathbf{p}}_B)$ is the best response of $N_0$ in Stage II given other players' prices $(\mathrm{\mathbf{p}}_A^*,\mathrm{\mathbf{p}}_B)$,
$$U_{0}(BR_2, \mathrm{\mathbf{p}}_A^*,\mathrm{\mathbf{p}}_B) \geq U_{0}(v_0, \mathrm{\mathbf{p}}_A^*,\mathrm{\mathbf{p}}_B).$$
In addition, we have
\begin{eqnarray*}
U_{0}(BR_2, \mathrm{\mathbf{p}}_A^*,\mathrm{\mathbf{p}}_B) &=& \sum_{N_i\in A}U_0^i(BR_2,p_i^*) + \sum_{N_i\in B}U_0^i(BR_2,p_i)\\
& \leq &\sum_{N_i\in A} U_0^i(v_0, p_i^*) + \sum_{N_i\in B}U_0^i(BR_2,p_i),
\end{eqnarray*}
where the inequality is from (\ref{CE5}). So
\begin{eqnarray}
  \sum_{N_i\in B}U_0^i(BR_2,p_i) &\geq & U_{0}(BR_2, \mathrm{\mathbf{p}}_A^*,\mathrm{\mathbf{p}}_B) - \sum_{N_i\in A} U_0^i(v_0, p_i^*)\nonumber\\
  &\geq &U_{0}(v_0, \mathrm{\mathbf{p}}_A^*,\mathrm{\mathbf{p}}_B) - \sum_{N_i\in A} U_0^i(v_0, p_i^*)=\sum_{N_i\in B} U_0^i(v_0, p_i).\label{CE4}
\end{eqnarray}

Combining (\ref{CE2}), (\ref{CE3}) and (\ref{CE4}), we have
\begin{eqnarray*}
U_B(BR_2,\mathrm{\mathbf{p}}_A^*,\mathrm{\mathbf{p}}_B)  &\leq &C_B - \sum_{N_i\in B}U_0^i(BR_2, p_i) \\
&\leq&  C_B - \sum_{N_i\in B}U_0^i(v_0,p_i)\\
&=&\sum_{N_i\in B}U_0^i(v_0, p_i^*) + U_B(v_0, p_1^{*}, \cdots p_k^{*})- \sum_{N_i\in B}U_0^i(v_0,p_i)\\
&=&U_B(v_0, p_1^{*}, \cdots p_k^{*}) + (\sum_{N_i\in B}U_0^i(v_0, p_i^*)- \sum_{N_i\in B}U_0^i(v_0,p_i))\\
&<&U_B(v_0, p_1^{*}, \cdots p_k^{*}),
\end{eqnarray*}
where the last inequality is from (\ref{CE1}). This lemma holds. \qed
\end{proof}

\end{document}